\documentclass[12pt]{iopart}
\usepackage{iopams} 

\usepackage[english]{babel}
\usepackage{wrapfig} 
\usepackage{tikz}
\usepackage{ifsym}
\usepackage{subcaption}

\usepackage{hyperref}

\usepackage{url}

\expandafter\let\csname equation*\endcsname\relax

\expandafter\let\csname endequation*\endcsname\relax

\usepackage{cite}
\usepackage{amsmath}
\usepackage{amsthm} 
\usepackage{amssymb} 
\usepackage{mathrsfs}
\usepackage{mathtools}
\usepackage{dsfont} 
\usepackage{textcomp} 
\usepackage{physics} 
\newtheorem{theorem}{Theorem}[section]

\newtheorem{corollary}[theorem]{Corollary}
\newtheorem{definition}[theorem]{Definition}

\newtheorem{lemma}[theorem]{Lemma}
\newtheorem{proposition}[theorem]{Proposition}

\theoremstyle{theorem}
\newtheorem{remark}{Remark}

\definecolor{greenlink}{rgb}{0.3, 0.6, 0.3}
\definecolor{greenclear}{rgb}{0.7,0.95,0.7}
\definecolor{grey}{rgb}{0.8,0.8,0.8}
\definecolor{greenother}{rgb}{0.75,0.9,0.75}

\DeclareMathOperator{\Ker}{Ker}
\DeclareMathOperator{\Attr}{Attr}
\DeclareMathOperator\supp{supp}
\DeclareMathOperator\Fix{Fix}
\DeclareMathOperator{\Ran}{Ran}
\DeclareMathOperator{\spec}{spect}

\begin{document}
	\title{\textbf{Asymptotics of quantum channels}}
	\author{Daniele Amato, Paolo Facchi, and Arturo Konderak }
	\address{Dipartimento di Fisica, Universit\`a di Bari, I-70126 Bari, Italy}
	\address{INFN, Sezione di Bari, I-70126 Bari, Italy}

	\begin{abstract}
		We discuss several aspects concerning the asymptotic dynamics of dicrete-time semigroups associated with a quantum channel. By using an explicit expression of the asymptotic map, which describes the action of the quantum channel on its attractor manifold, we investigate the role of permutations in the asymptotic dynamics. We show that, in general, they make the asymptotic evolution non-unitary, and they are related to the divisibility of the quantum channel. Also, we derive several results about the asymptotics of faithful and non-faithful channels, and we establish a constructive unfolding theorem for the asymptotic dynamics.  
	\end{abstract}
	
	\begin{section}{Introduction}
		Modeling a quantum computer or, more generally, any complex quantum hardware by a sequence of unitary gates is no longer sufficient to obtain efficient devices. Indeed, the unavoidable effects of noise and decoherence due to the coupling of the system with its environment is known to be a key obstacle for quantum information processing~\cite{zurek2003decoherence}. 
		
		In this respect, dissipation processes can be harnessed in order to protect quantum information in the manifold of steady states of dissipative systems. In particular, the asymptotic dynamics of open quantum systems can be exploited for the realization of unitary operations inside subspaces protected from noise and decoherence~\cite{zanardi1997noiseless,zanardi2014coherent}. 
		Also, the system evolution at large times may be used in reservoir engineering~\cite{Zoller_res_eng, Wolf_res_eng}, namely the preparation of a target state by relaxation of a system suitably coupled to the environment, or in phase-locking and synchronization of two quantum systems~\cite{jex_synchr}. 	

A non-degenerate attractor manifold, i.e.\ consisting of a unique stationary state towards which the evolution converges, is often not enough in all these applications. 
		
		Therefore, it is crucial to study the dynamics of dissipative systems at large times under less restrictive assumptions, as it was done in~\cite{jex_st_2012,jex_st_2018,wolf2010inverse,wolf2012quantum,baumgartner2012structures,ticozzi_inv,
			albert2019asymptotics} for finite-dimensional systems, and~\cite{carbone2020period} for infinite-dimensional ones. 
		In particular, many efforts were devoted to the asymptotics of Markovian continuous dynamics~\cite{baumgartner2008analysis1,baumgartner2008analysis2,albert2014symmetries,fagnola_2001,agredo2014decoherence}, described by quantum dynamical semigroups. They were completely characterized by Gorini, Kossakowski, Lindblad and Sudarshan (GKLS)~\cite{GKS_76,Lindblad_76}, and the manifold of their steady states was analyzed  in the following years~\cite{Spohn_77,Frigerio_78,Frigerio_Verri_82,Spohn_rev}.  
		
	In this Article we will investigate the asymptotic structure of a quantum channel. The starting point of our analysis is the direct sum decomposition of the attractor manifold, where the asymptotic dynamics takes place, and the action of a quantum channel onto it, as obtained by Wolf and Perez-Garcia~\cite[Theorem 2.1]{wolf2010inverse,wolf2012quantum}.
		In spite of the importance of this result and its use in previous works~\cite{lami2016entanglement,hanson2020eventually}, its consequences have not been fully investigated yet in the literature. 
		
		In particular, the presence of permutations between the factors of the decomposition generally implies the lack of unitarity of the asymptotic dynamics. For a better understanding of the role of permutations, we will find a sufficient condition and a characterization of their absence, from which the connection with the Markovianity and divisibility of the channel emerges.
		
	Also, we will construct a quantum channel with a given asymptotic dynamics, namely with a fixed attractor subspace and dynamics onto it (Theorem 4.1), proving a converse of Theorem 2.1. This is an example of extension theorem for quantum channels, in the trail of similar results  obtained in the literature~\cite{arveson1969subalgebras,jenvcova2012generalized}.
		
		Moreover, we will prove a structure theorem of the asymptotic map in terms of disjoint cyclic permutations (Theorem~\ref{thm:cycles}). It yields an alternative decomposition of the asymptotic manifold which may be obtained from~\cite{guan2018structure}. Finally, previous results on the asymptotics of quantum dynamics by Novotn{\'y} \textit{et al.}~\cite{jex_st_2012,jex_st_2018} will be extended to  non-faithful channels.
		
	The paper is organized as follows. First, we will  recall some general properties of quantum channels (Sec.~\ref{QC}) and of infinitely divisible channels (Sec.~\ref{inf_ch}), and then we give the structure theorem of the asymptotic map in Sec.~\ref{struc_wolf}.
		Afterwards, in Sec.~\ref{aternativeexpression} we prove an alternative structure theorem in terms of disjoint cyclic permutations, while in Section~\ref{extens_th} we will construct a quantum channel unfolding the asymptotic map. Then, in Section~\ref{per_role} we will explore in detail the role of permutations in the the asymptotic dynamics, characterize their absence, 
		and link them to the infinite divisibility and primitivity of the channel. 
		Finally, in Section~\ref{Jex_Wolf} we recall several results on the asymptotics of quantum channels, generalize them to the non-faithful case, and check their consistency with the structure theorems. After the concluding remarks (Section~\ref{concl}), we prove in~\ref{proofs} some results needed in the main part of the paper.         
		
\end{section}
	\begin{section}{Asymptotics of open quantum systems}
		\label{asympt}
		In this Section we will set up the notation and recall some known results about the asymptotic dynamics of finite-dimensional open quantum systems.
		\begin{subsection}{Quantum channels}
			\label{QC}
			The state of an open quantum system is given by a density operator $\rho$, i.e.\ a positive semidefinite operator of unit trace on $\mathcal{H}$, with $\mathcal{H}$ the system Hilbert space, that in the following will be assumed to have a finite dimension $d$. Let $\mathcal{S}(\mathcal{H})$ be the set of density operators on $\mathcal{H}$.
			The evolution of the state $\rho$ in the unit time is given by a quantum channel $\Phi$, that is a completely positive trace-preserving map on $\mathcal{B}(\mathcal{H})$, the algebra of bounded operators on $\mathcal{H}$~\cite{heinosaari2011mathematical}. The adjoint map $\Phi^{\dagger}$ of the channel $\Phi$ is defined with respect to the Hilbert-Schmidt scalar product $\braket{A}{B}_{\mathrm {HS}}=\Tr(A^{\dagger}B)$ via
			\begin{equation}
				\braket{A}{\Phi(B)}_{\mathrm {HS}}= \braket*{\Phi^{\dagger}(A)}{B}_{\mathrm {HS}},\quad A,B \in \mathcal{B}(\mathcal{H}).
			\end{equation}
			$\Phi^\dagger$ is a completely positive unital map on $\mathcal{B}(\mathcal{H})$, describing the evolution of system observables, i.e.\ the dynamics in the Heisenberg picture.
			
			Working in the Schr\"odinger picture, given a quantum channel $\Phi$, the evolution at time $t=n\in \mathbb{N}$ will be given by the $n$-fold composition $\Phi^{n}=\Phi \circ \Phi\circ \dots \circ \Phi$, so we obtain a discrete-time semigroup $(\Phi^n)_{n\in\mathbb{N}}$. 
			
		The spectrum $\spec(\Phi)$ of a quantum channel $\Phi$ in finite dimensions is the set of its eigenvalues, and it satisfies the following three properties
		\begin{itemize}
				\item $1 \in \spec(\Phi)$,
				\item $\lambda \in \spec(\Phi) \Rightarrow \bar{\lambda}\in \spec(\Phi)$,
				\item $\spec(\Phi)\subseteq \{ \lambda \in \mathbb{C} \,| \, |\lambda| \leqslant 1 \}$.
			\end{itemize}
			Incidentally, these properties still hold for the larger class of positive and trace-preserving maps~\cite{wolf2012quantum}.
			In particular, the \textit{peripheral spectrum} $\spec_P(\Phi)$ of the channel $\Phi$ is defined as the set of eigenvalues with unit modulus, namely 
			\begin{equation}
				\spec_{P}(\Phi)= \{ \lambda \in \spec(\Phi) \,|\, |\lambda|=1 \}.
			\end{equation}
			The asymptotic dynamics, obtained in the limit $n\rightarrow \infty$, takes place inside the \textit{asymptotic}, \textit{peripheral}, or \textit{attractor subspace} of $\Phi$, defined as
			\begin{equation}
				\Attr(\Phi)= \bigoplus_{\lambda \in \spec_{P}(\Phi)} \Ker(\Phi - \lambda\mathsf{1}).
			\end{equation}
			Equivalently,
			\begin{equation}
				\Attr(\Phi)= \mbox{span} \{ X \,|\, \Phi(X)=\lambda X \mbox{ for some $\lambda\in\spec_{P}(\Phi)$} \} ,
			\end{equation}
			i.e.\ the space spanned by the eigenvectors corresponding to the peripheral eigenvalues. 
			
			Physically, the attractor subspace is the space of recurrences $Y$ of the map $\Phi$, namely~\cite{wolf2012quantum}
			\begin{equation}
				\forall \epsilon >0\,\, \exists n \in \mathbb{N} \mbox{  s.t.  } \|\Phi^{n}(Y)-Y\|_{\mathrm{HS}}\leqslant \epsilon.
			\end{equation}
			Some of its elements are limit cycles
			\begin{equation}
				\Phi^{n}(Y)=Y \mbox{ for some }n \in \mathbb{N},
			\end{equation}
			and, in particular, fixed points
			\begin{equation}
				\Fix(\Phi)=\{ Y \,|\, \Phi(Y)=Y \}.
			\end{equation}
			
			Now, let us introduce the spectral decomposition of a quantum channel $\Phi$
			\begin{equation}
				\Phi=\sum_{k=1}^{N} (\lambda_k\mathcal{P}_k + \mathcal{N}_k),
				\label{Jordan_dec}
			\end{equation}
			where $\mathcal{P}_{k}$ and $\mathcal{N}_{k}$ are the eigenprojections and eigenilpotents of $\Phi$ corresponding to the $k$-th eigenvalue $\lambda_{k}$, $k=1,\dots , N$~\cite{kato2013perturbation}. In particular, let us denote by $\mathcal{P}$ the eigenprojection onto the fixed point space $\Fix(\Phi)$, corresponding to the eigenvalue $\lambda=1$. Also, let $\mathcal{P}_P$ be the \textit{peripheral eigenprojection}, i.e.\ the one onto the attractor subspace $\Attr(\Phi)$, which may be written explicitly as
			\begin{equation}
				\mathcal{P}_{P}=\sum_{\lambda_k\in\spec_{P}(\Phi)}\mathcal{P}_k.
			\end{equation}
			Both the eigenprojections $\mathcal{P}$ and $\mathcal{P}_P$ can be written in terms of the channel $\Phi$ as
			\begin{align}
				&\mathcal{P}=\lim_{N\rightarrow \infty} \frac{1}{N}\sum_{n=1}^{N}\Phi^n,\label{P_for}\\
				&\mathcal{P}_P=\lim_{i\rightarrow \infty}\Phi^{n_i},\label{P_p_for}
			\end{align}
			for some increasing subsequence $( n_i)_{i\in\mathbb{N}}$. 
			From Eqs~\eqref{P_for} and~\eqref{P_p_for} it is clear that both $\mathcal{P}$ and $\mathcal{P}_P$ are quantum channels.
			We can also introduce the \textit{peripheral channel} $\Phi_P$ of $\Phi$:
			\begin{equation}
				\Phi_{P}=\mathcal{P}_P\Phi=\Phi\mathcal{P}_P=\sum_{\lambda
					_{k}\in\spec_{P}(\Phi)}\lambda_k\mathcal{P}_k,
				\label{per_ch}
			\end{equation}
			Note that in Eq.~\eqref{per_ch} the eigenilpotents do not appear, since the peripheral eigenvalues are semisimple~\cite{wolf2012quantum}.
			
			Interestingly, one gets
			\begin{equation}
				\Fix(\Phi) \subseteq 0\oplus\mathcal{B}(\mathcal{H}_{0}),
				\label{fix_DEC}
			\end{equation}
			where $ \mathcal{H}_{0} = \supp \mathcal{P}(\mathbb{I})=\Ran \mathcal{P}(\mathbb{I})$~\cite[Proposition 6.9]{wolf2012quantum}. Here, $\supp A$ and $\Ran A$ denote the support and the range of the operator $A$, and $0$ acts on $\mathcal{H}_{0}^{ \perp}$, the orthogonal complement  of $\mathcal{H}_{0}$.
			Therefore 
			\begin{equation}
				X \in \Fix(\Phi) \quad \Rightarrow\quad \supp(X),\quad \Ran(X) \subseteq \mathcal{H}_{0},
			\end{equation}
			and, in particular, $\mathcal{P}(\mathbb{I})$ is a maximum-rank fixed point of $\Phi$.
			By applying~\cite[Proposition 6.9]{wolf2012quantum} to $\mathcal{P}_{P}$, we obtain
			\begin{equation}
				\Attr(\Phi)\subseteq 0\oplus\mathcal{B}(\mathcal{H}_{0}^{\prime}) ,
				\label{Attr_struc}
			\end{equation} 
			where $\mathcal{H}_{0}^{\prime}=\supp\mathcal{P}_{P}(\mathbb{I})=\Ran \mathcal{P}_{P}(\mathbb{I}) \supseteq \mathcal{H}_{0}$. In fact, as proved in~\ref{SUPPPSUPPP} and~\ref{PPNEQP}, it turns out that
			\begin{equation}
				\mathcal{H}_{0}=\mathcal{H}_{0}^{\prime},
				\label{suppPpsuppP}
			\end{equation}
			even though, in general,
			\begin{equation}
				\mathcal{P}_{P}(\mathbb{I}) \neq \mathcal{P}(\mathbb{I}).
				\label{PpneqP}
			\end{equation}
			Finally, notice that $\mathcal{P}_{P}(\mathbb{I})=\mathcal{P}(\mathbb{I})$  for unital channels, i.e.\ $\Phi(\mathbb{I})=\mathbb{I}$, or channels with trivial peripheral spectrum, viz. $\spec_{P}(\Phi)=\{ 1 \}$, such as primitive maps (see Definition~\ref{prim_def}). 
			
		\end{subsection}
		
		\begin{subsection}{Infinitely divisible channels}
			\label{inf_ch}
			We now introduce the important class of infinitely divisible quantum channels which in the following will be proved to exhibit a special asymptotic behavior.
			
			Let $(\Phi_t)_{t\in \mathbb{R}^{+}}$ be a one-parameter family of channels with $\Phi_0=\mathsf{1}$, the identity channel. If the semigroup property $\Phi_{t_1+t_2}=\Phi_{t_1}\Phi_{t_2}$ is satisfied for all $t_1,t_2\in\mathbb{R}^+$, the family is called a quantum dynamical semigroup, and it is of the form
			\begin{equation}
				\Phi_t=e^{t\mathcal{L}},\quad t\in\mathbb{R}^+.
			\end{equation}
			Here, $\mathcal{L}$ is a GKLS generator and has the following structure~\cite{GKS_76,Lindblad_76}:
			\begin{equation}
				\label{GKLS}
				\mathcal{L}(X)=-i[H, X]+\sum_{k=1}^{d^{2}-1}
				\Bigl( A_{k} X A_{k}^{\dagger}-\frac{1}{2} \{ A_{k}^{\dagger}A_{k} , X \} \Bigr) =\mathcal{L}_{H}(X)+\mathcal{L}_{D}(X),
			\end{equation}
			for $X \in \mathcal{B}(\mathcal{H})$.
			Here, the square (curly) brackets denote the (anti)commutator, $H=H^\dagger$ is the system Hamiltonian, and the noise operators $A_{k}$ are arbitrary operators on a $d$-dimensional Hilbert space.
			Moreover,
			$\mathcal{L}_{H}$ and $\mathcal{L}_{D}$ stand for the Hamiltonian and dissipative parts of the generator $\mathcal{L}$, respectively. A quantum channel $\Phi=e^{\mathcal{L}}$, with $\mathcal{L}$ of the form~\eqref{GKLS}, is said to be Markovian. Note that a Markovian channel $\Phi$ is invertible, which is equivalent to say that $0\not\in\spec(\Phi)$. 
			
			Another relevant class of channels are the idempotent channels, satisfying the condition $\Phi^2=\Phi$. A paradigmatic example is the contraction channel
			\begin{equation}
				\Phi(X)=\tr(X)\rho,\quad \rho\in\mathcal{S}(\mathcal{H}), \quad X\in\mathcal{B}(\mathcal{H}).
				\label{contr_ch}
			\end{equation} 
			Unlike Markovian channels, idempotent channels are not invertible, except for the trivial identity channel $\mathsf{1}$. However, both classes of channels share the following more general property, called infinite divisibility.
			\begin{definition}
				\label{def:ifchannel}
				Let $\Phi$ be a quantum channel. Then $\Phi$ is said to be infinitely divisible iff for all $n\in\mathbb{N}$,
				\begin{equation}
					\Phi=\Phi_n^n, 
				\end{equation}
				for some quantum channel $\Phi_n$, called an $n$-th root of $\Phi$.
			\end{definition}
			While Markovian and idempotent channels are infinitely divisible, there are examples of infinitely divisible channels which are neither Markovian nor idempotent. Nevertheless, it can be proved that $\Phi$ is infinitely divisible iff
			\begin{equation}
				\Phi=\mathcal{Q}e^\mathcal{L},
			\end{equation} 
			where $\mathcal{Q}$ is an idempotent channel, $\mathcal{L}$ a GKLS generator, and $\mathcal{Q}\mathcal{L}=\mathcal{Q}\mathcal{L}\mathcal{Q}$~\cite{denisov1988infinitely,wolf2008dividing}.
			
		\end{subsection}
		
	\begin{subsection}{The structure of the asymptotic map}
			\label{struc_wolf}
			One may ask whether there exists a finer structure for the asymptotic manifold $\Attr(\Phi)$ and for the action of the \textit{asymptotic map} 
			\begin{equation}
				\hat{\Phi}_P=\Phi\vert_{\Attr(\Phi)}=\Phi_{P}\vert_{\Attr(\Phi)}
			\end{equation}
			of $\Phi$. The answer is given by the following  structure theorem of Perez-Garcia and Wolf~\cite[Theorem 8]{wolf2010inverse}.
			\begin{theorem}[Asymptotic map]
				\label{Wolf_struc}
				Let $\Phi$ be a quantum channel on $\mathcal{B}(\mathcal{H})$, and $\mathcal{P}$ its eigenprojection corresponding to $\lambda=1$.
				
				There exists a decomposition of the Hilbert space $\mathcal{H}$ 
				\begin{equation}
					\mathcal{H}=\mathcal{H}_{0}^\perp \oplus \bigoplus_{k=1}^{M} \mathcal{H}_{k,1}\otimes \mathcal{H}_{k,2},
					\label{H_dec}
				\end{equation}
				with $\mathcal{H}_0=\supp\mathcal{P}(\mathbb{I})$, $\mathcal{H}_{k,i}$ being Hilbert spaces ($k=1,\dots, M$ and $i=1,2$), and there exist positive definite density operators $\rho_{k}$ on $\mathcal{H}_{k,2}$ such that
				\begin{equation}
					\Attr(\Phi)=0\oplus \bigoplus_{k=1}^{M}\mathcal{B}(\mathcal{H}_{k,1}) \otimes \mathbb{C}\rho_{k}.
					\label{attr_struc}
				\end{equation}
				 
				Therefore every $X\in \Attr(\Phi)$ can be written as
				\begin{equation}
					\label{attr_struct_element}
					X=0\oplus \bigoplus_{k=1}^{M} x_{k} \otimes \rho_{k},
				\end{equation}
				for some operators $x_{k}\in\mathcal{B}(\mathcal{H}_{k,1})$.
				
				Moreover, there exist unitaries $U_{k}$ on $\mathcal{H}_{k,1}$ and a permutation $\pi$ on the set $\{ 1,\dots , M \}$ such that the asymptotic map of the quantum channel $\Phi$ reads
				\begin{equation}
					\hat{\Phi}_P(X)=0 \oplus \bigoplus_{k=1}^{M} U_{k}x_{\pi(k)}U_{k}^{\dagger}\otimes \rho_{k},
					\label{struc_phiP}
				\end{equation}
				with $X \in \Attr(\Phi)$ in the form~\eqref{attr_struct_element}.
			\end{theorem}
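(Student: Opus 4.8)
The plan is to realize the attractor subspace as a finite-dimensional $C^*$-algebra and the asymptotic map as one of its $\ast$-automorphisms, so that the whole statement reduces to the classical structure theory of such algebras. First I would reduce to the faithful case: by \eqref{Attr_struc} together with \eqref{suppPpsuppP} one has $\Attr(\Phi)\subseteq 0\oplus\mathcal{B}(\mathcal{H}_0)$, so nothing is lost by restricting every map to $\mathcal{B}(\mathcal{H}_0)$, on which $\sigma:=\mathcal{P}_P(\mathbb{I})$ is a faithful (full-rank) fixed point. The two engines of the proof are then (i) the fact that the fixed-point set of a unital completely positive map with a faithful invariant state is a $\ast$-subalgebra, and (ii) the Wedderburn classification of finite-dimensional $C^*$-algebras and of their $\ast$-automorphisms.

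For the decomposition \eqref{H_dec}--\eqref{attr_struct_element}, the key observation is that $\Attr(\Phi)=\Ran\mathcal{P}_P=\Fix(\mathcal{P}_P)$, where $\mathcal{P}_P$ is itself an idempotent quantum channel. Passing to the Heisenberg picture on $\mathcal{B}(\mathcal{H}_0)$, the map $\mathcal{P}_P^{\dagger}$ is a unital completely positive idempotent with faithful invariant state $\sigma$, so the Kadison--Schwarz inequality $\mathcal{P}_P^\dagger(A^\dagger A)\geqslant \mathcal{P}_P^\dagger(A)^\dagger\mathcal{P}_P^\dagger(A)$ is saturated on $\Fix(\mathcal{P}_P^\dagger)$; this forces the fixed points into the multiplicative domain, whence $\Fix(\mathcal{P}_P^\dagger)$ is a $\ast$-algebra. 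By the structure theorem for finite-dimensional $C^*$-algebras it is unitarily equivalent to $\bigoplus_k \mathcal{B}(\mathcal{H}_{k,1})\otimes\mathbb{I}_{\mathcal{H}_{k,2}}$, which fixes the decomposition \eqref{H_dec}. Transporting this algebra back to the Schr\"odinger picture by the modular conjugation $A\mapsto \sigma^{1/2}A\sigma^{1/2}$ turns each identity factor $\mathbb{I}_{\mathcal{H}_{k,2}}$ into a positive definite density operator $\rho_k$ on $\mathcal{H}_{k,2}$, yielding exactly \eqref{attr_struc} and the elementwise form \eqref{attr_struct_element}.

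For the action \eqref{struc_phiP}, I would exploit that $\hat\Phi_P=\Phi_P\vert_{\Attr(\Phi)}$ is invertible, its eigenvalues being semisimple and of unit modulus. Its inverse is again completely positive: along the subsequence of \eqref{P_p_for} one has $\Phi_P^{n_i}\to\mathcal{P}_P$, hence $\Phi_P^{n_i-1}\to\hat\Phi_P^{-1}$ on $\Attr(\Phi)$, and a limit of channels is completely positive and trace preserving. A completely positive, trace-preserving bijection of the attractor algebra whose inverse is also completely positive must be a $\ast$-automorphism of it. The classification of $\ast$-automorphisms of $\bigoplus_k\mathcal{B}(\mathcal{H}_{k,1})$ (Skolem--Noether within each simple summand, together with a permutation of the summands of equal dimension) then provides a permutation $\pi$ with $\dim\mathcal{H}_{\pi(k),1}=\dim\mathcal{H}_{k,1}$ and unitaries $U_k$ acting as $x_{\pi(k)}\mapsto U_k x_{\pi(k)} U_k^\dagger$, which is precisely \eqref{struc_phiP}.

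The main obstacle I expect is the bookkeeping of the modular twist in the second step: one must verify that conjugation by $\sigma^{1/2}$ carries the abstract Heisenberg fixed-point algebra to an operator subspace of the prescribed tensor form with genuine density operators $\rho_k$, and that the automorphism of the third step is compatible with these weights, i.e.\ that $\pi$ matches each block only to a block with an isometrically identified first factor while leaving the $\rho_k$ in place. A secondary technical point is justifying the saturation of the Kadison--Schwarz inequality on the fixed-point set, where faithfulness of $\sigma$ on $\mathcal{H}_0$ is essential.
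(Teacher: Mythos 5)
The paper offers no proof of this statement: Theorem~\ref{Wolf_struc} is imported from Wolf and Perez-Garcia~\cite{wolf2010inverse} (Theorem 8 there), so there is no internal argument to compare yours against, and your proposal in fact reproduces the standard proof strategy of that source --- reduction to the faithful case on $\mathcal{B}(\mathcal{H}_0)$, the Kadison--Schwarz/multiplicative-domain argument showing that $\Fix(\mathcal{P}_P^{\dagger})$ is a unital $\ast$-subalgebra, the Wedderburn decomposition, the twist by $\sigma^{1/2}$, and the classification of $\ast$-automorphisms of $\bigoplus_k\mathcal{B}(\mathcal{H}_{k,1})$ as block permutations composed with inner automorphisms. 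The one step you flag as delicate --- that $\sigma$ factorizes as $\bigoplus_k \tau_k\otimes\rho_k$ across the Wedderburn blocks, so that the twist actually lands in the tensor form \eqref{attr_struc} and the automorphism leaves the $\rho_k$ in place --- is indeed the only substantive gap, and it is closed by the standard modular-invariance (Takesaki) argument: the state-preserving conditional expectation $\mathcal{P}_P^{\dagger}$ forces $\sigma^{it}$ to normalize the fixed-point algebra, which in finite dimensions yields the blockwise product form of $\sigma$.
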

			
			\begin{remark}\label{attr_inva}
				The permutation $\pi$ acts on subsets of $\{ 1,\dots , M \}$, so that the corresponding Hilbert spaces must have the same dimension, i.e.
				\begin{equation}
					d_{\pi(k)}=d_k,\quad k=1,\dots , M,
				\end{equation}
				with $d_{k}=\dim(\mathcal{H}_{k,1})$,
				consistently with the fact that $\Attr(\Phi)$ is a proper invariant subspace for $\Phi$,
				\begin{equation}
					\Phi\Attr(\Phi)=\Attr(\Phi).
					\label{Attr_inva}
				\end{equation}  
				\label{per_eq_dim}
			\end{remark}
			\begin{remark}[\cite{AFK_asympt}]
				If $\Phi$ is faithful, i.e.\ it has a full-rank fixed state, then $\dim(\mathcal{H}_{0}^{ \perp})=0$, so that the zero term in~\eqref{attr_struc} disappears. Consequently,
				\begin{equation}
					\mathcal{P}_P(X)= \sum_{k=1}^{M} \tr_{k,2}(P_{k}XP_{k})\otimes \rho_{k},
					\label{P_p_struc_faith}
				\end{equation}
				
				where $P_{k}$ is the projection onto $\mathcal{H}_{k,1}\otimes \mathcal{H}_{k,2}$ and $\tr_{k,2}$ denotes the partial trace over $\mathcal{H}_{k,2}$.

				Moreover, it can be shown that $\Attr(\Phi^{\dagger})$ is a unital $\ast$-algebra~\cite{davidson1996c} with the following structure~\cite[Theorem 1]{carbone2020period}
				\begin{equation}
					\Attr(\Phi^{\dagger})=\bigoplus_{k=1}^{M} \mathcal{B}(\mathcal{H}_{k,1}) \otimes \mathbb{I}_{k,2},
				\end{equation}
				where $\mathbb{I}_{k,2}$ is the identity on $\mathcal{H}_{k,2}$. 	
				Following the terminology of~\cite{Viola_inform_pres}, $\Attr(\Phi)$ is called the distorted algebra of $\Attr(\Phi^{\dagger})$ in the sense that it is an algebra with respect to the modified product
				\begin{equation}
					A \star B := A\mathcal{P}_P(\mathbb{I})^{-1}B \in \Attr(\Phi),\quad A,B\in \Attr(\Phi).
					\label{mod_prod}
				\end{equation}
				This can be immediately seen if we observe that
				\eqref{P_p_struc_faith} implies that
				\begin{equation}
					\mathcal{P}_P(\mathbb{I})= \bigoplus_{k=1}^{M}m_k\mathbb{I}_{k,1}\otimes \rho_{k},
				\end{equation}
				where $\mathbb{I}_{k,1}$ is the identity on $\mathcal{H}_{k,1}$ and $m_k=\dim(\mathcal{H}_{k,2})$.
				\label{faith}
			\end{remark}
			
			\begin{remark}
				As a consequence of~\eqref{struc_phiP}, every maximum-rank fixed state $\sigma$ has the form
				\begin{equation}
					\sigma =\frac{1}{\sum_{k=1}^{M}\Tr(\sigma_{k})} \Bigl(0 \oplus \bigoplus_{k=1}^{M}\sigma_{k}\otimes \rho_{k}\Bigr),
					\label{max_rank_dec}
				\end{equation}
				with $\sigma_k > 0$ such that
				\begin{equation}
					U_k\sigma_{\pi(k)}U_k^\dagger =\sigma_k,\quad k=1,\dots , M.
					\label{fix_con}
				\end{equation}	
			\end{remark}
			
		\end{subsection}
	\end{section}
	
	\begin{section}{Cyclic structure of the asymptotic map}
		\label{aternativeexpression}
		
		In this Section, we are going to exhibit an interesting alternative form for the  asymptotic map $\hat{\Phi}_P$ of a quantum channel $\Phi$. In particular, we are going to show that, in each disjoint cycle of the permutation $\pi$, the channel can be written as the tensor product between a unitary channel and a permutation between the density matrices $\rho_k$ in the decomposition~\eqref{attr_struc} of the asymptotic manifold. In order to do this, we will first need to rewrite the map~\eqref{struc_phiP} in a form where in  each cycle all the unitaries $U_k$ are the same.
		
		According to Theorem~\ref{Wolf_struc}, given the Hilbert space decomposition~\eqref{H_dec} and an element $X\in \Attr(\Phi)$ in the form~\eqref{attr_struct_element}, the action of the asymptotic map $\hat{\Phi}_P$ on $X$ is in the form~\eqref{struc_phiP}. 
		Notice that local unitary transformations (i.e.\ local basis changes) $V_k : \mathcal{H}_{k,1} \to \mathcal{H}_{k,1}'$, with $k=1,\dots,M$, induce a unitary map $V:\mathcal{H}\to\mathcal{H}'$, where 
		\begin{equation}
			\mathcal{H}'=\mathcal{H}_{0}^\perp \oplus \bigoplus_{k=1}^{M} \mathcal{H}_{k,1}'\otimes \mathcal{H}_{k,2}
		\end{equation}
		has exactly the same structure of $\mathcal{H}$. We will use this freedom in order to simplify the expression of the asymptotic  map $\hat{\Phi}_P$.
		
		Any operator of the asymptotic manifold $X\in \Attr(\Phi)$ is transformed by $V$ into
		\begin{equation}
			X'=VXV^\dag=0\oplus\bigoplus_{k=1}^{M} x'_k\otimes \rho_k,
		\end{equation}
		where
		\begin{align}
			x'_k=V_k x_k V_k^\dagger.
		\end{align}
		
		Correspondingly, the action of the asymptotic map on $X'$ becomes
		\begin{equation}
			\hat{\Phi}_P'(X')= V \hat{\Phi}_P(X) V^\dag = 
			V \hat{\Phi}_P(V^\dag X' V) V^\dag,
		\end{equation}
		that is
		\begin{equation}
			\hat{\Phi}_P'(X')=0\oplus\bigoplus_{k=1}^M (V_k U_k V_{\pi(k)}^\dagger )x'_{\pi(k)}( V_{\pi(k)} U_k^\dagger V_k^\dagger)\otimes\rho_k=0\oplus \bigoplus_{k=1}^MU'_k x'_{\pi(k)} U_k'^\dagger\otimes \rho_k.
		\end{equation}
		As a result, changing the local basis will change the operators $U_k$ as
		\begin{equation}\label{eq:changingbasis}
			U_k\mapsto U'_k= V_kU_kV_{\pi(k)}^\dagger.
		\end{equation}

		In general a permutation $\pi$ is the product of disjoint cycles.
		Assume first that $\pi=(k_1,\dots,k_M)$ is a cyclic permutation  of length $M$, that is
		\begin{equation}
			\label{cyc_per}
			\pi(k_j)=k_{j+1} \mod{M}.
		\end{equation}
		In such a case all Hilbert spaces are isomorphic,
		\begin{equation}
			\mathcal H_{1,1}'\simeq\mathcal H_{2,1}'\simeq\dots \simeq\mathcal H_{M,1}'\simeq \mathcal H_1,
		\end{equation}
		and we can write
		\begin{equation}
			\mathcal{H}'=\mathcal{H}_{0}^\perp \oplus \bigoplus_{k=1}^{M} \mathcal{H}_{1}\otimes \mathcal{H}_{k,2}.
		\end{equation}	
		
		Now, we require that the unitaries $U'_k$ be all equal :
		\begin{equation}
			U'_{k_1}=U'_{k_2}=\dots=U'_{k_M}=  U,
		\end{equation}
		which means that
		\begin{equation}
			U=V_{k_j}U_{k_j}V_{k_{j+1}}^\dagger\mod M.
		\end{equation}
		By writing $V_{k_{j+1}}$ in terms of $V_{k_{j}}$, one gets
		\begin{equation}
			V_{k_{1}}=( U^\dagger)^M V_{k_{1}} (U_{k_{1}}U_{k_{2}}\cdots U_{k_{M}}),
		\end{equation}
		or explicitly
		\begin{equation}
			U=V_{k_1} (U_{k_{1}}U_{k_{2}}\cdots U_{k_{M}})^{1/M} V_{k_1}^\dagger.
		\end{equation}
		Therefore, up to a local unitary $V_{k_1}$  on $\mathcal{H}_{k_1 ,1}\simeq\mathcal{H}_{1}$, the transformed unitary ${U}$ turns out to be a geometric mean of the original unitaries $U_{k_j}$, $j=1,\dots , M$, along the cycle.
		
	Thus, by dropping all primes, the expression~\eqref{struc_phiP} for the asymptotic map  becomes
		\begin{equation}\label{eq:Phip_uniformU}
			\hat{\Phi}_P(X)=0\oplus \bigoplus_{k=1}^M  U x_{\pi(k)} U^\dagger \otimes \rho_{k}.
		\end{equation}
		
		Moreover, we can decompose the Hilbert space $\mathcal H$ as
		\begin{equation}\label{eq:hilbertspaceuniformdecomposition}
			\mathcal H=\mathcal H_0^{\perp}\oplus  \bigoplus_{k=1}^M \mathcal H_1\otimes  \mathcal H_{k,2} = 
			\mathcal H_0^{\perp}\oplus \mathcal{H}_1 \otimes \Bigl(\bigoplus_{k=1}^M   \mathcal H_{k,2}\Bigr).
		\end{equation}
		
		With respect to the decomposition~\eqref{eq:hilbertspaceuniformdecomposition}, the attractor subspace $\Attr(\Phi)$ becomes
			
		\begin{equation}
			\mathrm{Attr}(\Phi)=0\oplus \mathcal{B}(\mathcal{H}_1) \otimes \Bigl(\bigoplus_{k=1}^{M}\mathbb{C}\rho_k\Bigr),
		\end{equation}
		and the action of $\hat{\Phi}_P$ on an element $X\in \Attr(\Phi)$,
		\begin{equation}
			X= 0 \oplus \sum_{k=1}^M x_k\otimes \hat{\rho}_k,
		\end{equation} 
		with $\hat{\rho}_k = (0\oplus 0\oplus \dots \oplus \rho_k \oplus 0\oplus\dots\oplus 0)$, is given by
		
		\begin{equation}
			\hat{\Phi}_P(X)=0\oplus \sum_{k=1}^M  U x_{\pi(k)}  U^\dagger\otimes \hat{\rho}_k
			= 0\oplus \sum_{k=1}^M  U x_{k}  U^\dagger\otimes \hat{\rho}_{\pi^{-1}(k)}.
		\end{equation}
		Now, it is easy to check that one gets 
		$\hat{\rho}_{\pi^{-1}(k)} = \Phi_\pi(\hat{\rho}_k)$, where $\Phi_\pi$ is the channel on $\mathcal{B} ( \bigoplus_{k}\mathcal{H}_{k,2} )$ acting as
		\begin{equation}
			\Phi_\pi(Z)=\bigoplus_{k=1}^M \tr( Q_{\pi(k)}Z)\rho_k,\qquad Z\in \mathcal{B} \Bigl( \bigoplus_{k=1}^{M}\mathcal{H}_{k,2} \Bigr),
		\end{equation}
		with $Q_{k}$ being the projection onto $\mathcal{H}_{k,2}$.
		
		Therefore, the asymptotic map can be written as
		\begin{equation}	
			\hat{\Phi}_P(X)	=  0\oplus (\Phi_{ U} \otimes \Phi_\pi) (X_0),\qquad X= 0 \oplus X_0 \in \Attr(\Phi),
			\label{Ch_cycl}
		\end{equation}
		with
		\begin{align}
			\Phi_{ U}(Y)={ U}Y{ U}^\dagger,\quad Y\in \mathcal{B}(\mathcal{H}_{1}),
		\end{align}
		being a unitary channel.

		The above decomposition, obtained for a permutation consisting of a single $M$-cycle, generalizes in a natural way to an arbitrary permutation with $L$ disjoint cycles $\pi=\pi_1\circ\pi_2\circ \cdots \circ\pi_L$. We have thus proved the following structure theorem for the asymptotic dynamics, which is the main result of this section.
		\begin{theorem}[Cyclic decomposition]
			\label{thm:cycles}

			Let $\Phi$ be a quantum channel on $\mathcal{B}(\mathcal{H})$.
			
			Then, there exists a 
			decomposition of the Hilbert space $\mathcal{H}$,
			\begin{equation}
				\mathcal H=\mathcal H_0^{\perp}\oplus \bigoplus_{\ell=1}^L \biggl(  \mathcal H_1^{(\ell)}\otimes \Bigl(\bigoplus_{k=1}^{m_{\ell}} \mathcal H_{k,2}^{(\ell)}\Bigr)   \biggr),
			\end{equation}
			such that the asymptotic manifold of $\Phi$ has the form
			\begin{equation}
				\Attr(\Phi)=0\oplus \bigoplus_{\ell=1}^L \biggl(  \mathcal{B}(\mathcal H_1^{(\ell)})\otimes \Bigl(\bigoplus_{k=1}^{m_{\ell}}\mathbb{C}\rho_k^{(\ell)}\Bigr)   \biggr),
				\label{Attr_cyc_struc}
			\end{equation}
			with $\rho_k^{(\ell)}$ being positive definite density operators. 
			
			The asymptotic map of $\Phi$, $\hat{\Phi}_P:\Attr(\Phi)\to\Attr(\Phi)$ is given by
			\begin{equation}
				\hat{\Phi}_P(X)=0\oplus\bigoplus_{\ell=1}^L(\Phi_{U_\ell} \otimes \Phi_{\pi_\ell}) (X_0),\quad X=0 \oplus X_0\in \Attr(\Phi),
				\label{Ch_cycl_gen}
			\end{equation}
			where 
			\begin{equation}
				\Phi_{U_\ell}(Y)={U_\ell}Y{U_\ell}^\dagger,\quad Y\in \mathcal{B}(\mathcal{H}_1^{(\ell)}),
			\end{equation} 
			are unitary channels, and
			\begin{equation}
				\Phi_{\pi_\ell}(Z)=\bigoplus_{k=1}^{m_{\ell}} \tr( Q_{{\pi_\ell}(k)}^{(\ell)} Z)\rho_k^{(\ell)},\qquad Z\in \mathcal{B} \Bigl(\bigoplus_{k=1}^{m_{\ell}} \mathcal H_{k,2}^{(\ell)}\Bigr),
			\end{equation}
			with $\pi_\ell$ being an $m_{\ell}$-cycle, and $Q_{k}^{(\ell)}$ being the projection onto $\mathcal{H}_{k,2}^{(\ell)}$, for $\ell=1,\dots,M$.
		\end{theorem}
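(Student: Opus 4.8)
The plan is to reduce the general case to the single-cycle computation carried out immediately before the theorem statement. Starting from the form of the asymptotic map supplied by Theorem~\ref{Wolf_struc}, namely $\hat{\Phi}_P(X)=0\oplus\bigoplus_{k=1}^M U_k x_{\pi(k)} U_k^\dagger\otimes\rho_k$, I would first write $\pi$ as the product of its $L$ disjoint cycles $\pi=\pi_1\circ\cdots\circ\pi_L$, where $\pi_\ell$ has length $m_\ell$ and acts on a subset $S_\ell\subseteq\{1,\dots,M\}$. Since the $S_\ell$ partition $\{1,\dots,M\}$ and $\pi$ maps each $S_\ell$ onto itself, the direct sum over $k$ splits as a direct sum over $\ell$ of partial sums running over $k\in S_\ell$, with each partial sum invariant under $\hat{\Phi}_P$.

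By Remark~\ref{per_eq_dim}, $\pi$ preserves dimensions, $d_{\pi(k)}=d_k$; hence within each cycle all the factor spaces $\mathcal{H}_{k,1}$, $k\in S_\ell$, share a common dimension and are isomorphic to a single Hilbert space $\mathcal{H}_1^{(\ell)}$. This is exactly the configuration treated above for one $M$-cycle. I would therefore invoke the local-unitary freedom~\eqref{eq:changingbasis} independently inside each $S_\ell$, choosing basis changes $V_k$ supported on the indices of the $\ell$-th cycle so as to make all unitaries $U_k$ with $k\in S_\ell$ equal to a common $U_\ell$, realized (up to a single local unitary) as the geometric mean $(U_{k_1}U_{k_2}\cdots U_{k_{m_\ell}})^{1/m_\ell}$ along the cycle. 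Because the cycles are disjoint, these choices act on independent tensor factors and do not interfere across different values of $\ell$.

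Having made the unitaries uniform within each cycle, I would regroup exactly as in~\eqref{eq:hilbertspaceuniformdecomposition}, factoring $\mathcal{H}_1^{(\ell)}$ out of the $\ell$-th block, which gives
\[
\mathcal{H}=\mathcal{H}_0^\perp\oplus\bigoplus_{\ell=1}^L\Bigl(\mathcal{H}_1^{(\ell)}\otimes\bigl(\textstyle\bigoplus_{k=1}^{m_\ell}\mathcal{H}_{k,2}^{(\ell)}\bigr)\Bigr)
\]
together with the stated forms of the decomposition and of $\Attr(\Phi)$. On each block the asymptotic map then reads $\bigoplus_k U_\ell\, x_{\pi_\ell(k)}\, U_\ell^\dagger\otimes\rho_k^{(\ell)}$, and repeating verbatim the rewriting that produced~\eqref{Ch_cycl}—introducing the embedded operators $\hat{\rho}_k^{(\ell)}$ and recognizing the permutation channel $\Phi_{\pi_\ell}$—factorizes this as $(\Phi_{U_\ell}\otimes\Phi_{\pi_\ell})(X_0^{(\ell)})$. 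Taking the direct sum over $\ell$ yields~\eqref{Ch_cycl_gen}.

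I do not expect a serious obstacle, since the analytic content is already contained in the single-cycle analysis; the only new ingredient is the bookkeeping confirming that the disjointness of the cycles lets the single-cycle construction run in parallel on each block without cross-talk. The point requiring a little care is checking that the geometric-mean unitary is well defined and unitary—any branch of the $m_\ell$-th root of $U_{k_1}\cdots U_{k_{m_\ell}}$ serves—and that the resulting tensor-product factorization does not depend on the arbitrary choice of starting index $k_1$ within each cycle.
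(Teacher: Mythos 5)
Your proposal is correct and follows essentially the same route as the paper: the text preceding the theorem carries out the single-cycle analysis (uniformizing the unitaries via the local basis changes~\eqref{eq:changingbasis} and the geometric mean along the cycle, then regrouping as in~\eqref{eq:hilbertspaceuniformdecomposition} to obtain~\eqref{Ch_cycl}), and the paper's proof of the general case is precisely the observation that this construction runs independently on each disjoint cycle. Your added remarks on the choice of branch of the $m_\ell$-th root and the independence from the starting index are sensible points of care that the paper leaves implicit.
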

		
	Observe that the decomposition~\eqref{Ch_cycl_gen} can also  be obtained from~\cite[Theorem 2]{guan2018structure}.
		In particular, the map $\Phi_{\pi_\ell}$ corresponds to the peripheral channel of the irreducible map $\mathcal E_\ell$ appearing there, up to the identification 
		$\Attr(\mathcal E_\ell)=\bigoplus_{k=1}^{m_{\ell}} \mathbb{C} \rho_k^{(\ell)}$, and $\pi_{\ell}$ as the permutation appearing in the expression~\eqref{struc_phiP} for the asymptotic map of $\mathcal E_\ell$.
		
		Thus, the attractor subspace reads
		\begin{equation}
			\Attr(\Phi)=0 \oplus \bigoplus_{\ell=1}^L\mathcal{B}(\mathcal H_1^{(\ell)}) \otimes \Attr(\Phi_{\pi_{\ell}}).
			\label{Ch_dec_Attr}
		\end{equation}
		Notice that, in the absence of permutations $\pi=\mathrm{id}$,~\eqref{Ch_dec_Attr} reduces to the decomposition~\eqref{struc_phiP}. A sufficient condition and a characterization for $\pi=\mathrm{id}$ will be discussed in in Sec.~\ref{per_role}. 
		
		Armed with this new expression for the asymptotic map $\hat{\Phi}_P$, we can decompose the fixed point space $\Fix(\Phi)$ of $\Phi$ as
		
		\begin{equation}
			\Fix(\Phi)= 0\oplus\bigoplus_{\ell=1}^L	\mathrm{Fix} (\Phi_{ U_\ell})\otimes \tilde\rho_\ell.
		\end{equation}
		Here
		\begin{equation}
			\tilde{\rho}_\ell=\frac{1}{m_{\ell}}\bigoplus_{k=1}^{m_{\ell}} \rho_k^{(\ell)}
		\end{equation}
		is the unique (full-rank) fixed point of the irreducible channel $\Phi_{\pi_\ell}$ and
		\begin{equation}
			\mathrm{Fix} (\Phi_{ U_\ell})=\{  U_\ell \}^\prime = \bigl\{ X \in \mathcal{B}(\mathcal{H}_1^{(\ell)}) \,:\, [X , U_\ell]=0	\bigr\},
		\end{equation}
		where the square brackets denote the commutator. 
		In addition, as already observed in~\cite[Subsection 5.3]{albert2019asymptotics}, we can obtain from~\eqref{Ch_cycl_gen} the structure of the peripheral eigenvalues and eigenvectors of $\Phi$. In particular, the structure of the peripheral spectrum is consistent with~\cite[Theorem 9]{wolf2010inverse}.  
		\end{section}
	\begin{section}{Unfolding an asymptotic map}
		\label{extens_th}
		In Section~\ref{struc_wolf}, we discussed the  expression 
		of the asymptotic map $\hat{\Phi}_P$ of a quantum channel $\Phi$, that describes the asymptotic dynamics of the quantum system under investigation. An alternative decomposition~\eqref{Attr_cyc_struc} of the attractor manifold $\Attr(\Phi)$ and its corresponding asymptotic map~\eqref{Ch_cycl_gen} was then given in Sec.~\ref{aternativeexpression}. 
		
		A fundamental question to be addressed is whether, given an arbitrary subspace $\mathcal{K}$ decomposed as in~\eqref{attr_struc} and an arbitrary map $\Phi_{\mathcal K}$ of the form~\eqref{struc_phiP} acting on it, there always exists a quantum channel on the whole algebra $\mathcal B(\mathcal H)$ whose attractor subspace is $\mathcal{K}$ and whose asymptotic map is $\Phi_{\mathcal{K}}$. The answer is affirmative, as shown in the following Theorem.
		\begin{theorem}[Unfolding an asymptotic dynamics]
			\label{th:extensiontheorem}
			Let $\mathcal H$ be a $d$-dimensional Hilbert space of the form
			\begin{equation}
				\mathcal H=\mathcal H_0^{\perp}\oplus\mathcal{H}_0, \qquad \mathcal{H}_0= \bigoplus_{k=1}^M \mathcal H_{k,1}\otimes \mathcal H_{k,2},
			\end{equation}
			with $\dim{\mathcal H_{k,1}}=d_k$.
			Let $\rho_k$ be a full-rank density operator over $\mathcal H_{k,2}$, $k=1,\dots , M$, and consider the  subspace $\mathcal{K}\subset \mathcal B(\mathcal H)$ given by
			\begin{equation}
				\mathcal{K}=0\oplus\bigoplus_{k=1}^M \mathcal{B}(\mathcal{H}_{k,1}) \otimes \mathbb{C}\rho_k.
			\end{equation}
			Let $\pi$ be a permutation of $\{ 1,\dots , M \}$ such that $d_k=d_{\pi(k)}$, and $U_k$ be unitaries in $\mathcal{B}(\mathcal{H}_{k,1})$. Consider the map $\Phi_\mathcal K:\mathcal K\rightarrow \mathcal K$ given by
			\begin{align}
				X=0\oplus\bigoplus_{k=1}^M x_k\otimes \rho_k&\mapsto \Phi_\mathcal K(X)=0\oplus\bigoplus_{k=1}^M U_kx_{\pi(k)}U_k^\dagger\otimes \rho_k.
			\end{align}
			Then, there exists a quantum channel $\Phi_{\mathrm E}$ on $\mathcal{B}(\mathcal{H})$
		such that
			\begin{itemize}
				\item[i.] $\mathrm{Attr}(\Phi_\mathrm{E})=\mathcal K$,
				\item[ii.] $(\widehat{\Phi_{E}})_P=\Phi_\mathrm{E}\vert_{\mathcal K}=\Phi_\mathcal K$.
			\end{itemize}
		\end{theorem}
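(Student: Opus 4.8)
The plan is to construct $\Phi_{\mathrm E}$ explicitly, as a measure-and-prepare extension of an auxiliary channel on $\mathcal{B}(\mathcal{H}_0)$ that already reproduces $\Phi_{\mathcal K}$ on $\mathcal{K}$. Write $P_j$ for the orthogonal projection onto the block $\mathcal{H}_{j,1}\otimes\mathcal{H}_{j,2}$, and read each $U_k$ as a unitary $\mathcal{H}_{\pi(k),1}\to\mathcal{H}_{k,1}$, which is legitimate since $d_k=d_{\pi(k)}$. First I would define on $\mathcal{B}(\mathcal{H}_0)$ the map
\[
\Psi(Y)=\bigoplus_{k=1}^{M} U_k\,\tr_{\pi(k),2}\bigl(P_{\pi(k)}\,Y\,P_{\pi(k)}\bigr)\,U_k^{\dagger}\otimes\rho_k .
\]
This $\Psi$ is completely positive, being a direct sum of compressions, partial traces, unitary conjugations and preparations of the fixed states $\rho_k$. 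It is trace preserving because $\tr\rho_k=1$ and, as $\pi$ is a permutation, $\sum_{k}P_{\pi(k)}=\sum_{j}P_j=\mathbb{I}_{\mathcal{H}_0}$, so that $\tr\Psi(Y)=\sum_k\tr\bigl(P_{\pi(k)}YP_{\pi(k)}\bigr)=\tr Y$. Evaluating $\Psi$ on $X=\bigoplus_k x_k\otimes\rho_k\in\mathcal{K}$ and using $\tr_{\pi(k),2}(x_{\pi(k)}\otimes\rho_{\pi(k)})=x_{\pi(k)}$ gives $\Psi(X)=\bigoplus_k U_k x_{\pi(k)}U_k^{\dagger}\otimes\rho_k=\Phi_{\mathcal K}(X)$, so $\Psi$ restricts to $\Phi_{\mathcal K}$ on $\mathcal{K}$.

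Next I would extend $\Psi$ to the whole algebra. Fixing any density operator $\sigma_0\in\mathcal{K}$ and letting $P_0,P_\perp$ be the projections onto $\mathcal{H}_0,\mathcal{H}_0^{\perp}$, I set
\[
\Phi_{\mathrm E}(X)=\Psi\bigl(P_0 X P_0\bigr)+\tr\bigl(P_\perp X P_\perp\bigr)\,\sigma_0,\qquad X\in\mathcal{B}(\mathcal{H}).
\]
Complete positivity is immediate, since $X\mapsto P_0XP_0$ and the measure-and-prepare term $X\mapsto\tr(P_\perp X P_\perp)\sigma_0$ are CP and $\Psi$ is CP; trace preservation follows from that of $\Psi$ together with $\tr\sigma_0=1$, giving $\tr\Phi_{\mathrm E}(X)=\tr(P_0X)+\tr(P_\perp X)=\tr X$. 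Hence $\Phi_{\mathrm E}$ is a quantum channel. Moreover every $X\in\mathcal{K}$ satisfies $P_0XP_0=X$ and $P_\perp X=0$, so $\Phi_{\mathrm E}\vert_{\mathcal K}=\Psi\vert_{\mathcal K}=\Phi_{\mathcal K}$; this yields property (ii) as soon as the attractor is identified.

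It then remains to prove $\Attr(\Phi_{\mathrm E})=\mathcal{K}$. For $\mathcal{K}\subseteq\Attr(\Phi_{\mathrm E})$ I note that $\mathcal{K}$ is invariant and $\Phi_{\mathrm E}\vert_{\mathcal K}=\Phi_{\mathcal K}$ is invertible, its inverse $Y\mapsto\bigoplus_k U_{\pi^{-1}(k)}^{\dagger}y_{\pi^{-1}(k)}U_{\pi^{-1}(k)}\otimes\rho_k$ being again of the same form and hence the restriction of a channel built exactly as above; since the eigenvalues of both $\Phi_{\mathcal K}$ and $\Phi_{\mathcal K}^{-1}$ obey $|\lambda|\le 1$, every eigenvalue of $\Phi_{\mathcal K}$ is unimodular, so all of $\mathcal{K}$ consists of peripheral eigenvectors. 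The reverse inclusion is where the construction pays off: both summands of $\Phi_{\mathrm E}(X)$ lie in $\mathcal{K}$, the first because $\Ran\Psi\subseteq\mathcal{K}$ by the very form of $\Psi$, the second because $\sigma_0\in\mathcal{K}$, so that $\Ran\Phi_{\mathrm E}\subseteq\mathcal{K}$. Using~\eqref{P_p_for}, $\mathcal{P}_P=\lim_i\Phi_{\mathrm E}^{n_i}$ with $n_i\ge 1$, whence $\mathcal{P}_P(X)\in\mathcal{K}$ for all $X$ because $\mathcal{K}$ is closed; therefore $\Attr(\Phi_{\mathrm E})=\Ran\mathcal{P}_P\subseteq\mathcal{K}$. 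Combining the two inclusions gives $\Attr(\Phi_{\mathrm E})=\mathcal{K}$ and, with the previous paragraph, $(\widehat{\Phi_{\mathrm E}})_P=\Phi_{\mathcal K}$.

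The main obstacle is not the construction of $\Psi$ but the verification that $\Phi_{\mathrm E}$ has no spurious peripheral eigenvalues, i.e.\ that the attractor is \emph{exactly} $\mathcal{K}$ and not larger. I expect this to be the delicate step, and it is disposed of cleanly above by forcing $\Ran\Phi_{\mathrm E}\subseteq\mathcal{K}$, so that all recurrent behaviour is trapped inside $\mathcal{K}$ already after one application of the channel. Should one prefer not to place $\sigma_0$ inside $\mathcal{K}$, the same conclusion follows in two stages: first $\Ran\Phi_{\mathrm E}\subseteq\mathcal{B}(\mathcal{H}_0)$, so that $\mathcal{H}_0^{\perp}$ is transient, and then $\Phi_{\mathrm E}\vert_{\mathcal{B}(\mathcal{H}_0)}=\Psi$, whose range is already $\mathcal{K}$.
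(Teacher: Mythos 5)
Your construction is, up to notation, exactly the channel $\Phi_{\mathrm E}=(\Phi_0+\Phi_0^\perp)\circ\Phi_{\mathrm{pinch}}$ of the paper's proof (your $\Psi(P_0XP_0)$ is $\Phi_0\circ\Phi_{\mathrm{pinch}}$ and your second term is $\Phi_0^\perp\circ\Phi_{\mathrm{pinch}}$ with the reset state chosen inside $\mathcal K$), and the verification is sound; the only real differences are that you establish complete positivity by composing elementary CP maps, which is cleaner than the paper's explicit positivity computation in the appendix, and that taking $\sigma_0\in\mathcal K$ collapses the range to $\mathcal K$ after one step instead of two. One point worth making explicit: the inference ``all eigenvalues of $\Phi_{\mathcal K}$ are unimodular, hence $\mathcal K$ consists of peripheral eigenvectors'' also requires semisimplicity, which here follows because $\Phi_{\mathcal K}$ and its inverse are both restrictions of channels to an invariant subspace and hence power-bounded (equivalently, in the coordinates $(x_k)$ the map $\Phi_{\mathcal K}$ is a Hilbert--Schmidt isometry of $\bigoplus_k\mathcal B(\mathcal H_{k,1})$, so it is diagonalizable).
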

		\begin{remark}
			This result is along the lines of a series of extension theorems of completely positive maps~\cite{paulsen2002completely}, such as the classic result by Arveson~\cite{arveson1969subalgebras} and the more recent theorem by Jencova~\cite{jenvcova2012generalized}. 
			It is worth mentioning that the proof of Theorem~\ref{th:extensiontheorem} is constructive and provides an explicit form for the quantum channel $\Phi_E$ which unfolds the given asymptotic dynamics. 
		\end{remark}
		\begin{proof}
			We will prove the statement by providing an explicit example of a quantum channel $\Phi_E$ satisfying the conditions  $i.$ and $ii.$. First, let us define the pinching channel		\begin{equation}
				\begin{split}
					\Phi_\mathrm {pinch}:\mathcal B(\mathcal H)&\rightarrow \mathcal B(\mathcal H_0^\perp)\oplus\bigoplus_{k=1}^M \mathcal B(\mathcal H_{k,1}\otimes\mathcal H_{k,2}),\\
					Z& \mapsto Q^\perp Z Q^\perp\oplus \bigoplus_{k=1}^M P_k{Z}P_k,
				\end{split}
			\end{equation}
			where $P_k$ are the projections onto $\mathcal H_{k,1}\otimes \mathcal H_{k,2}$, while $Q^\perp$ is the projection onto $\mathcal H_0^\perp$. Clearly, $\Phi_\mathrm {pinch}$ is a quantum channel. 
			
			Let us post-compose $\Phi_\mathrm {pinch}$ with the sum of two maps. The first one is a variant of the one-state contraction channel 
			\begin{equation}
				\begin{split}
					\Phi_0^\perp:\mathcal B(\mathcal H_0^\perp)\oplus\bigoplus_{k=1}^M \mathcal B(\mathcal H_{k,1}\otimes\mathcal H_{k,2})&\rightarrow 0\oplus\mathcal B(\mathcal H_0),\label{Phi_0_sp}\\
					Z_0\oplus\bigoplus_{k=1}^M Z_k&\mapsto  0\oplus \tr(Z_0) \sigma,
				\end{split}
			\end{equation}
			with $\sigma\in\mathcal S(\mathcal H_0)$. 
			It is immediate to check that $\Phi_0^\perp$ is a nilpotent completely positive map of degree 2.
			[More generally, we could replace $\tr(Z_0)\sigma$ with $\Theta(Z_0)$, with $\Theta:\mathcal B(\mathcal H_0^\perp)\rightarrow B(\mathcal H_0)$ being an arbitrary quantum channel].

			The second map has the form
			\begin{equation}
				\begin{split}
					\Phi_0:\mathcal B(\mathcal H_0^\perp)\oplus\bigoplus_{k=1}^M  \mathcal B(\mathcal H_{k,1}\otimes\mathcal H_{k,2})&\rightarrow 0\oplus\bigoplus_{k=1}^M \mathcal B(\mathcal H_{k,1}\otimes\mathcal H_{k,2}),\\
					Z_0\oplus\bigoplus_{k=1}^M Z_k&\mapsto 0\oplus  \bigoplus_{k=1}^M U_{k}\tr_{\pi(k),2}(Z_{\pi(k)})U_{k}^{\dagger}\otimes \rho_k,
				\end{split}
				\label{eq:Phi0}
			\end{equation}
			where $\tr_{\pi(k),2}$ is the partial trace over $\mathcal{H}_{\pi(k),2}$.
			
			We claim that the map 
			\begin{equation}\label{eq:explicitPhi_E}
				\Phi_\mathrm E=(\Phi_0+\Phi_0^\perp)\circ \Phi_\mathrm {pinch},
			\end{equation}
			is indeed the sought extension.
			
			We need to check that $\Phi_\mathrm E$ is a quantum channel, and that conditions $\textit i.$ and $\textit ii.$ hold.
			Clearly, $\Phi_\mathrm E$ is trace-preserving by construction. For what concerns complete positivity, we only need to prove that $\Phi_0$ satisfies this property. The proof  is straightforward but lengthy, and it is given in~\ref{CP_Phi_0}.
			
			Furthermore, observe that
			\begin{equation}
				(\Phi_\mathrm E)^n(\mathcal{B}(\mathcal{H}))=\mathcal K, \quad \forall n\geqslant 2,
				\label{per_Phi_E}
			\end{equation}
			implying $\mathrm{Attr}(\Phi_\mathrm E)=\mathcal K$ by~\eqref{P_p_for}. Condition $ii.$ follows easily from the definition~\eqref{eq:explicitPhi_E} of $\Phi_\mathrm{E}$.
		\end{proof}
		\begin{remark}
			Observe that, if $\mathcal H_0^\perp=0$, then we have $\Phi_0^\perp=0$ by~\eqref{Phi_0_sp} and the map~\eqref{eq:explicitPhi_E} reduces to the faithful channel $\Phi_\mathrm{E}=\Phi_0\circ\Phi_\mathrm{pinch}$, coinciding with its peripheral channel. Incidentally, this channel appears in the proof of Theorem 9 of~\cite{wolf2010inverse} in the particular case $m_k=1$, $k=1,\dots , M$, i.e. $\rho_k=1$. 
			
		\end{remark}
		
	\end{section}
	
	\begin{section}{On the role of permutations}
		\label{per_role}
		One of the most intriguing features of the asymptotic dynamics~\eqref{struc_phiP} is the occurrence of permutations between the factors of the decomposition~\eqref{H_dec}, which in general prevents the asymptotic map $\hat{\Phi}_P$ to be a unitary channel and, indeed, a Hilbert-Schmidt unitary. For the sake of clarity, we give the following definition.
		
	\begin{definition}
			Let $\Phi$ be a quantum channel. Then, we say that its asymptotic map
			\begin{equation}
				\hat \Phi_P:\Attr(\Phi)\rightarrow \Attr(\Phi)
			\end{equation}	
			is unitary iff
			\begin{equation}
				\hat{\Phi}_P(X)=UXU^\dagger,
				\label{unit_asy}
			\end{equation}
			with $X\in\Attr(\Phi)$, and $U\in\mathcal{B}(\mathcal{H})$ being unitary, $UU^\dagger=U^\dagger U=\mathbb{I}$.
		\end{definition}
		
		Obviously,
		\begin{equation}
			\pi=\mathrm{id}\quad \Rightarrow \quad  \hat{\Phi}_P(X)=UX U^\dagger \quad\Rightarrow\quad \hat{\Phi}_{P}\hat{\Phi}_{P}^{\dagger}=\hat{\Phi}_{P}^{\dagger}\hat{\Phi}_{P}=\mathsf{1}_{P},
			\label{HS-unit}
		\end{equation}
		where $\mathsf{1}_P$ is the identity on $\Attr(\Phi)$. 
		When the last equality in~\eqref{HS-unit} holds, we say that the asymptotic map is Hilbert-Schmidt unitary. Here, the adjoint $\hat{\Phi}_{P}^{\dagger}$ is defined via the Hilbert-Schmidt inner product,
		\begin{equation}
			\braket*{X}{\hat{\Phi}_P(Y)}_{\mathrm {HS}}=\braket*{\hat{\Phi}_{P}^{\dagger}(X)}{Y}_{\mathrm {HS}},\quad X,Y \in \Attr(\Phi),
		\end{equation}
		and can be easily shown to read 
		\begin{equation}
			\hat{\Phi}_{P}^{\dagger}(X)=0 \oplus \bigoplus_{k=1}^{M}U_{\pi^{-1}(k)}^{\dagger}x_{\pi^{-1}(k)}U_{\pi^{-1}(k)} \otimes \frac{\tr(\rho_{\pi^{-1}(k)}^{2})}{\tr(\rho_{k}^{2})} \rho_{k},\quad X\in \Attr(\Phi).
			\label{phiPd_expr}
		\end{equation}
		
		In words, the absence of permutations implies that the asymptotic map is unitary and thus Hilbert-Schmidt unitary.

		However, notice that the converse of the implications~\eqref{HS-unit} does not hold in general. Indeed, consider the qubit channel~\cite[Section 3.1]{albert2019asymptotics}:
		\begin{equation}
			\Phi(X)=\frac{1}{2}(\sigma_{1} X \sigma_{1}+\sigma_{3} X \sigma_{3}),\quad X\in\mathcal{B}(\mathbb{C}^2),
			\label{unit_count}
		\end{equation}
		where $\sigma_j,\,j=1,2,3$ are the Pauli matrices.
		The peripheral eigenvalues and eigenvectors read 
		\begin{align}
			\Phi^\dagger(\mathbb{I})=\Phi(\mathbb{I})=\mathbb{I},\qquad \Phi^{\dagger}(\sigma_{2})=\Phi(\sigma_{2})=-\sigma_{2},
		\end{align}
		so that
		\begin{equation}
			\Attr(\Phi)=\{ \alpha \mathbb{I}+\beta \sigma_{2} \,|\, \alpha,\beta\in\mathbb{C} \}.
		\end{equation}
		In the basis in which $\sigma_{2}$ is diagonal, $X\in \mathrm{Attr}(\Phi)$ has the form
		\begin{equation}
			X= \begin{pmatrix}\alpha + \beta & 0\\
				0 & \alpha - \beta\end{pmatrix},
		\end{equation}
		and the action of the asymptotic map $\hat{\Phi}_P$ on it will be given by
		\begin{equation}
			\hat{\Phi}_{P}(X)=\begin{pmatrix}\alpha - \beta & 0\\
				0 & \alpha + \beta\end{pmatrix}.
		\end{equation}
		Thus, $\hat{\Phi}_{P}$ is just the flip of the two diagonal coefficients of $X$. However, $\hat{\Phi}_{P}$ is a unitary channel and, in particular,
		\begin{equation}
			\hat{\Phi}_{P}(X)=\sigma_{3}X\sigma_{3},\quad X\in\Attr(\Phi).
		\end{equation}
		So, to sum up, 
		\begin{equation}
			\hat{\Phi}_{P}(X)=UX U^{\dagger} \quad \not\Rightarrow\quad  \pi=\mathrm{id},
			\label{unit_per}
		\end{equation}
		implying that
		\begin{equation}
			\hat{\Phi}_{P}^{\dagger}=\hat{\Phi}_{P}^{-1} \quad \not\Rightarrow\quad  \pi=\mathrm{id}.
		\end{equation}
		We  conclude that Eq.~\eqref{unit_asy} and, consequently, the Hilbert-Schmidt unitarity of $\hat{\Phi}_P$ are not sufficient conditions for the absence of permutations. Instead, a characterization is given in the following Lemma.
		\begin{lemma}
			Let $\Phi$ be a quantum channel, with asymptotic map $\hat{\Phi}_P$. Then $\hat{\Phi}_P$ has no permutations iff, for all $n\in\mathbb N$,  $\hat \Phi_P$ is the $n$-fold composition of the asymptotic map of some quantum channel $\Phi_n$. Explicitly
			\begin{equation}
				\pi=\mathrm{id} \quad \text{iff}\quad\hat{\Phi}_P = (\hat{\Phi}_{n,P})^{n}\quad \forall n \in\mathbb{N} ,
			\end{equation}
			where $\hat{\Phi}_{n,P}$ is the asymptotic map of a quantum channel $\Phi_n$ with attractor subspace $\Attr(\Phi_n)=\Attr(\Phi)$.
			\label{ch_pi_id}
		\end{lemma}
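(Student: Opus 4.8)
The plan is to prove the two implications separately, leaning on the structure theorem (Theorem~\ref{Wolf_struc}) to read off permutations, on the unfolding theorem (Theorem~\ref{th:extensiontheorem}) to construct the roots $\Phi_n$, and on an elementary fact about the symmetric group $S_M$ to close the converse.

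For the forward implication, assume $\pi=\mathrm{id}$, so that by~\eqref{struc_phiP} the asymptotic map acts as $\hat{\Phi}_P(X)=0\oplus\bigoplus_k U_k x_k U_k^\dagger\otimes\rho_k$. Fix $n\in\mathbb{N}$. Since each $U_k$ is a unitary on the finite-dimensional space $\mathcal{H}_{k,1}$, it is normal and admits a unitary $n$-th root $W_k:=U_k^{1/n}$ (diagonalize and take $n$-th roots of the eigenvalues). Define a candidate asymptotic map on $\Attr(\Phi)$ by $\Phi_{\mathcal K}(X)=0\oplus\bigoplus_k W_k x_k W_k^\dagger\otimes\rho_k$, i.e.\ again with the identity permutation but with the unitaries $W_k$. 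This is precisely a map of the form~\eqref{struc_phiP} on $\mathcal K=\Attr(\Phi)$, so by Theorem~\ref{th:extensiontheorem} there exists a genuine quantum channel $\Phi_n$ with $\Attr(\Phi_n)=\Attr(\Phi)$ and $\hat{\Phi}_{n,P}=\Phi_{\mathcal K}$. Composing $n$ times and using $W_k^n=U_k$ together with $\mathrm{id}^n=\mathrm{id}$ yields $(\hat{\Phi}_{n,P})^n=\hat{\Phi}_P$, as required.

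For the converse, suppose that for every $n$ there is a channel $\Phi_n$ with $\Attr(\Phi_n)=\Attr(\Phi)$ and $(\hat{\Phi}_{n,P})^n=\hat{\Phi}_P$. Since $\Phi_n$ shares the attractor subspace of $\Phi$, applying Theorem~\ref{Wolf_struc} to $\Phi_n$ expresses its asymptotic map, with respect to the same decomposition~\eqref{attr_struc}, as $\hat{\Phi}_{n,P}(X)=0\oplus\bigoplus_k V_k x_{\sigma_n(k)} V_k^\dagger\otimes\rho_k$ for some unitaries $V_k$ and a permutation $\sigma_n\in S_M$. A direct computation of the composition shows that the $n$-fold power $(\hat{\Phi}_{n,P})^n$ carries the permutation $\sigma_n^n$ (the unitaries accumulate as ordered products along each orbit, but only the permutation part is relevant here). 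Matching with $\hat{\Phi}_P$ forces $\sigma_n^n=\pi$. It now suffices to select a single convenient value of $n$: taking $n=M!$ (or $\mathrm{lcm}(1,\dots,M)$), the order of every element of $S_M$ divides $n$, so $\sigma_n^n=\mathrm{id}$, whence $\pi=\mathrm{id}$.

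The main obstacle is the converse, and specifically the clean identification of the permutation $\sigma_n$ attached to $\Phi_n$ together with the composition law $\sigma_n^n=\pi$. This requires that, once the decomposition~\eqref{attr_struc} of the common attractor subspace is fixed, the permutation is a well-defined invariant of any channel possessing that attractor — a point that needs a short argument (essential uniqueness of the decomposition, via the density operators $\rho_k$ and the multiplicity spaces $\mathcal{H}_{k,1}$), after which the group-theoretic step is immediate.
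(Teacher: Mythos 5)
Your proposal is correct and follows essentially the same route as the paper: the forward direction takes unitary $n$-th roots $U_k^{1/n}$ and realizes them as asymptotic maps via the unfolding Theorem~\ref{th:extensiontheorem}, and the converse extracts the permutation $\pi_n$ of $\hat{\Phi}_{n,P}$, matches $\pi_n^n=\pi$ by comparing the action on arbitrary $x_k$, and concludes with $n=\mathrm{lcm}(1,\dots,M)$. The point you flag as needing care (that the permutation is a well-defined invariant once the decomposition of the common attractor is fixed) is exactly the step the paper also handles, by noting that $\widetilde{U}_{k,n}x_{\pi_n^n(k)}\widetilde{U}_{k,n}^\dagger=x_{\pi(k)}$ cannot hold for all $x_k$ unless $\pi_n^n=\pi$.
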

		\begin{proof} The $\Rightarrow$ implication is immediate, since we can simply set $\forall n\in\mathbb{N}$
			\begin{equation}
				\hat{\Phi}_{n,P}(X)= 0 \oplus \bigoplus_{k=1}^M U_k^{1/n}x_k (U_k^{1/n})^\dagger \otimes \rho_k,\quad X\in\Attr(\Phi),
			\end{equation} 
			which can be thought as the asymptotic maps of channels $\Phi_n$ constructed through the extension Theorem~\ref{th:extensiontheorem}.
			
			With regards to the other implication $\Leftarrow$, we can explicitly write the maps $\hat{\Phi}_{n,P}$ as
			\begin{equation}
				\hat{\Phi}_{n,P}(X)=0 \oplus \bigoplus_{k=1}^M U_{k,n}x_{\pi_{n}(k)}U_{k,n}^\dagger \otimes \rho_k,\quad n\in\mathbb{N},
			\end{equation} 
			for some unitaries $U_{k,n}$ and permutations $\pi_n$ on $\{ 1,\dots , M\}$.
			Therefore, our assumption reads
			\begin{equation}
				U_{k,n}U_{\pi_{n}(k),n}\dots U_{\pi_{n}^{n-1}(k),n}x_{\pi_{n}^{n}(k)} U_{\pi_{n}^{n-1}(k),n}^\dagger \dots U_{\pi_{n}(k),n}^\dagger U_{k,n}^\dagger=U_k x_{\pi(k)} U_k^\dagger,
			\end{equation}
			for $k=1,\dots,M,\, n\in\mathbb{N}$. In other words, 
			\begin{equation}
				\widetilde{U}_{k,n}x_{\pi_{n}^n(k)}\widetilde{U}_{k,n}^\dagger= x_{\pi(k)},\quad k=1,\dots,M,\, \quad n\in\mathbb{N},
				\label{eq_sys}
			\end{equation}
			where
			\begin{equation}
				\widetilde{U}_{k,n}=U_k^\dagger \prod_{l=0}^{n-1}U_{\pi_{n}^l(k),n},
			\end{equation}
			with $\pi_n^0=\mathrm{id}$. 
			
			We claim that this implies that $\pi=\pi_n^n$ for all $n$. Indeed, if there is a $\bar{n}\in\mathbb{N}$ such that $\pi \neq \pi_{\bar{n}}^{\bar{n}}$, the set of equations~\eqref{eq_sys} is not  satisfied for all $x_k \in \mathcal{B}(\mathcal{H}_{k,1})$ when $n=\bar{n}$. 
			
			Therefore, by taking $n$ to be the least common multiple of $1,\dots,M$, we get $\pi=\mathrm{id}$.
		\end{proof}
		
		Now, we  use the previous lemma to prove the main result of this Section.
		\begin{theorem}
			Let $\Phi$ be an infinitely divisible quantum channel. Then its asymptotic dynamics has no permutations, $\pi=\mathrm{id}$, and thus is unitary.
			\label{inf_SUFF}
		\end{theorem}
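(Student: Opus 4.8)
The plan is to verify the hypothesis of Lemma~\ref{ch_pi_id} for the infinitely divisible channel $\Phi$: I must produce, for every $n\in\mathbb N$, a quantum channel $\Phi_n$ with $\Attr(\Phi_n)=\Attr(\Phi)$ whose asymptotic map satisfies $\hat{\Phi}_P=(\hat{\Phi}_{n,P})^n$. Definition~\ref{def:ifchannel} already hands us the candidates: for each $n$ there is an $n$-th root $\Phi_n$ with $\Phi_n^n=\Phi$. What remains is to check that these roots share the attractor of $\Phi$ and that restricting to the attractor turns the identity $\Phi=\Phi_n^n$ into the desired $n$-fold composition.

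The key step is the purely spectral fact that $\Attr(\Psi^n)=\Attr(\Psi)$ for an arbitrary channel $\Psi$. The inclusion $\Attr(\Psi)\subseteq\Attr(\Psi^n)$ is immediate, as $\Psi(X)=\lambda X$ with $|\lambda|=1$ forces $\Psi^n(X)=\lambda^n X$ with $|\lambda^n|=1$. For the converse I would invoke the spectral decomposition~\eqref{Jordan_dec} of $\Psi$: each generalized eigenspace $G_j$ associated with the eigenvalue $\lambda_j$ is $\Psi$-invariant, hence $\Psi^n$-invariant, and $\Psi^n\vert_{G_j}$ carries the single eigenvalue $\lambda_j^n$. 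Since the eigenprojections $\mathcal P_j$ commute with $\Psi^n$, any $X\in\Attr(\Psi^n)$ decomposes as $X=\sum_j \mathcal P_j X$ with each component $\mathcal P_j X$ again in $\Attr(\Psi^n)\cap G_j$. If $|\lambda_j|<1$ the restriction $\Psi^n\vert_{G_j}$ has spectral radius $|\lambda_j|^n<1$ and thus no peripheral eigenvector, forcing $\mathcal P_j X=0$; if $|\lambda_j|=1$ the eigenvalue is semisimple, $\mathcal N_j=0$, so $G_j$ consists of genuine peripheral eigenvectors of $\Psi$ and $\mathcal P_j X\in\Attr(\Psi)$. Summing over $j$ gives $X\in\Attr(\Psi)$.

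Applying this with $\Psi=\Phi_n$ yields $\Attr(\Phi_n)=\Attr(\Phi_n^n)=\Attr(\Phi)$. Because the attractor subspace is invariant under its own channel, Eq.~\eqref{Attr_inva}, the map $\Phi_n$ preserves $\Attr(\Phi)=\Attr(\Phi_n)$, so $\hat{\Phi}_{n,P}=\Phi_n\vert_{\Attr(\Phi)}$ and
\[
\hat{\Phi}_P=\Phi\vert_{\Attr(\Phi)}=\Phi_n^n\vert_{\Attr(\Phi)}=\bigl(\Phi_n\vert_{\Attr(\Phi)}\bigr)^n=(\hat{\Phi}_{n,P})^n.
\]
Since $n\in\mathbb N$ was arbitrary, the hypothesis of Lemma~\ref{ch_pi_id} holds and we conclude $\pi=\mathrm{id}$; the unitarity of $\hat{\Phi}_P$ then follows from the chain of implications~\eqref{HS-unit}.

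I expect the only genuine obstacle to be the spectral lemma $\Attr(\Psi^n)=\Attr(\Psi)$, since everything else is restriction bookkeeping. That step is where the structure of the peripheral spectrum really enters, through the semisimplicity of the unit-modulus eigenvalues and the strict contractivity on the rest, and it is precisely what ensures that extracting roots neither enlarges nor shrinks the attractor manifold; note that the argument is uniform in the choice of root $\Phi_n$, so it applies to all $n$ simultaneously as required.
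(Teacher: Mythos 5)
Your proposal is correct and follows essentially the same route as the paper: both reduce the claim to Lemma~\ref{ch_pi_id} using the $n$-th roots $\Phi_n$ from Definition~\ref{def:ifchannel}, establish $\Attr(\Phi_n)=\Attr(\Phi)$ from the spectral structure, and then restrict the identity $\Phi=\Phi_n^n$ to the common (invariant) attractor subspace. The only cosmetic difference is that you prove the attractor equality by a direct generalized-eigenspace argument using semisimplicity of the peripheral eigenvalues, whereas the paper invokes uniqueness of the Jordan decomposition in~\ref{NTH-ROOT} and phrases the final step via the commuting peripheral eigenprojections $\mathcal{P}_P=\mathcal{P}_{P,n}$.
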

		\begin{proof} According to Lemma~\ref{ch_pi_id} it is sufficient to prove that
			\begin{equation}
				\hat{\Phi}_P = (\hat{\Phi}_{n,P})^{n},\quad \forall n \in\mathbb{N} ,
				\label{ch_no_per}
			\end{equation}
			where $\Phi_n^n=\Phi$, as in Definition~\ref{def:ifchannel}. First, observe that 
			\begin{align}
				&\Attr(\Phi)=\Attr(\Phi_n),\label{Attr_nth_root}\\
				&\spec(\Phi)=\spec(\Phi_n)^n=\{\lambda^n\,|\,\lambda\in\spec(\Phi_n)\},\label{eigen_nth_root}
			\end{align}
			as it can be seen from the Jordan decompositions of $\Phi$ and $\Phi_n$ (see~\ref{NTH-ROOT}). So, if $\mathcal{P}_{P,n}$ is the peripheral eigenprojection of $\Phi_n$, we can compute
			\begin{equation}
				\Phi_P=\mathcal{P}_P\Phi_n^n=\mathcal{P}_{P,n}\Phi_n^{n}=(\mathcal{P}_{P,n}\Phi_n)^n=\Phi_{n,P}^n,
				\label{phi_P_div}
			\end{equation} 
			where we used $[\mathcal{P}_{P,n} , \Phi_n]=0$, by~\eqref{P_p_for}. The statement follows by restricting both sides of~\eqref{phi_P_div} to the attractor subspace $\Attr(\Phi)=\Attr(\Phi_n)$.
		\end{proof}

		An immediate corollary to Theorem~\ref{inf_SUFF} is the following~\cite{AFK_asympt}.
		\begin{corollary}
			Let $\Phi$ be a quantum channel with asymptotic map $\hat{\Phi}_{P}$.
			\begin{enumerate}
				\item If $\Phi=e^{\mathcal{L}}$, with $\mathcal{L}$ being a GKLS generator (Markovian channel), then $\hat{\Phi}_{P}(X)=U X U^{\dagger}$, with $U$ unitary,
				\item If $\Phi^{2}=\Phi$ (idempotent channel), then $\hat{\Phi}_{P}(X)=U X U^{\dagger}$, with $U$ unitary.
			\end{enumerate} 
			\label{suff_unit_cor}
		\end{corollary}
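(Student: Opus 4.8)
The plan is to subsume both cases under the hypothesis of Theorem~\ref{inf_SUFF}: once I exhibit that Markovian and idempotent channels are infinitely divisible, the conclusion follows immediately from that theorem together with the implication \eqref{HS-unit}. So the entire content of the corollary reduces to placing the two stated classes inside the infinitely divisible family, after which no further argument about the asymptotic structure is needed.

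For a Markovian channel $\Phi=e^{\mathcal{L}}$, I would display the explicit family of $n$-th roots $\Phi_n=e^{\mathcal{L}/n}$. Since $\mathcal{L}/n$ is again of the GKLS form \eqref{GKLS} — the structure is manifestly preserved under multiplication by the positive scalar $1/n$, which merely rescales the Hamiltonian $H$ and the noise operators $A_k$ — each $\Phi_n$ is a bona fide quantum channel, and $\Phi_n^n=e^{\mathcal{L}}=\Phi$. Hence $\Phi$ is infinitely divisible in the sense of Definition~\ref{def:ifchannel}. For an idempotent channel $\Phi^2=\Phi$, I would instead take $\Phi_n=\Phi$ for every $n$: idempotence gives $\Phi^n=\Phi$ for all $n\geqslant 1$, so $\Phi_n^n=\Phi$, and $\Phi$ is again infinitely divisible. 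This is precisely the observation already recorded in Sec.~\ref{inf_ch}.

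With infinite divisibility in hand, Theorem~\ref{inf_SUFF} forces $\pi=\mathrm{id}$ in both cases, and the first implication in \eqref{HS-unit} then yields $\hat{\Phi}_P(X)=UXU^\dagger$ for a unitary $U$ on $\mathcal H$; concretely, reading off \eqref{struc_phiP} with $\pi=\mathrm{id}$, one may take $U$ to be any unitary on $\mathcal H_0^{\perp}$ (e.g.\ the identity) direct-summed with $\bigoplus_{k=1}^M U_k\otimes\mathbb{I}_{k,2}$. I do not expect any genuine obstacle: the corollary is immediate once the two classes are recognized as infinitely divisible, and the only point deserving a moment's care is checking that the GKLS form survives the rescaling $\mathcal{L}\mapsto\mathcal{L}/n$, which is transparent from the structure in \eqref{GKLS}.
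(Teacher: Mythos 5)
Your proposal is correct and follows essentially the same route as the paper: the authors present this as an immediate corollary of Theorem~\ref{inf_SUFF}, relying on the fact (already recorded in Sec.~\ref{inf_ch}) that Markovian and idempotent channels are infinitely divisible, exactly as you argue via $\Phi_n=e^{\mathcal{L}/n}$ and $\Phi_n=\Phi$ respectively. The only difference is that you spell out the verification of infinite divisibility and the final appeal to \eqref{HS-unit}, which the paper leaves implicit.
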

		The first implication is consistent with the dynamics at large times of quantum dynamical semigroups (Section IV of~\cite{albert2014symmetries}), marking a difference with the discrete-time semigroups at the level of the asymptotics (see also Section 5.2 of~\cite{baumgartner2012structures}). Ultimately, the absence of permutations in the expression~\eqref{struc_phiP} for $\hat{\Phi}_P$ is linked to the divisibility of the channel. 
		
		The above results are also consistent with~\cite[Section 4]{AFK_asympt} where it was shown that the unitary Tomita-Takesaki modular dynamics on the attractor subspace $\Attr(\Phi)$ is the asymptotic map of $\Phi^M$, with $M$ denoting the least common multiple of the lenghts of the cycles of the permutation. In such way, the original discrete-time dynamics $(\Phi^n)_{n\in\mathbb{N}}$ associated with $\Phi$ was coarse-grained, becoming ``sufficiently" Markovian for the disappearance of permutations.    
		
		There are other classes of channels $\Phi$ for which no permutations appear in their asymptotic dynamics $\hat{\Phi}_P$. For example, this is true for primitive channels, whose definition is recalled here, together with that of irreducible channels~\cite{wolf2012quantum}.
		\begin{definition}
			Let $\Phi$ be a quantum channel. Then
			\begin{itemize}
				\item $\Phi$ is irreducible if it has a unique full-rank fixed state $\sigma$, i.e. $\Phi(\sigma)=\sigma$, with $0 < \sigma \in\mathcal{S}(\mathcal{H})$;
				
				\item $\Phi$ is called primitive if it is irreducible and it has trivial peripheral spectrum, namely $\spec_P(\Phi)=\{ 1\}$.
			\end{itemize} 
			\label{prim_def}
		\end{definition}
		\begin{proposition}
			Let $\Phi$ be a quantum channel. If $\Phi$ is primitive, then its asymptotic dynamics has no permutations, $\pi=\mathrm{id}$.
			\label{prim_suff}
		\end{proposition}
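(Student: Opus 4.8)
The plan is to lean on the feature of primitivity that goes beyond irreducibility, namely the triviality of the peripheral spectrum, $\spec_P(\Phi)=\{1\}$. The first step is to observe that this condition collapses the attractor subspace onto the fixed-point space. Indeed, from the definition $\Attr(\Phi)=\bigoplus_{\lambda\in\spec_P(\Phi)}\Ker(\Phi-\lambda\mathsf{1})$, when $\spec_P(\Phi)=\{1\}$ we get at once $\Attr(\Phi)=\Ker(\Phi-\mathsf{1})=\Fix(\Phi)$. Consequently every $X\in\Attr(\Phi)$ satisfies $\Phi(X)=X$, so the asymptotic map reduces to the identity, $\hat{\Phi}_P=\mathsf{1}_P$.

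The second step is to feed this back into the structure of $\hat{\Phi}_P$ provided by Theorem~\ref{Wolf_struc}. Writing $X=0\oplus\bigoplus_{k=1}^M x_k\otimes\rho_k$ as in~\eqref{attr_struct_element} and imposing $\hat{\Phi}_P(X)=X$ with the form~\eqref{struc_phiP}, the directness of the sum together with the fact that each $\rho_k$ is a fixed, nonzero density operator (so that $A\otimes\rho_k=B\otimes\rho_k$ implies $A=B$) forces the blockwise identity $U_k x_{\pi(k)} U_k^\dagger=x_k$ for every $k$ and every choice of operators $x_j\in\mathcal{B}(\mathcal{H}_{j,1})$.

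The final step rules out a nontrivial permutation. If $\pi(k)\neq k$ for some $k$, then $x_k$ and $x_{\pi(k)}$ are independent entries of $X$; choosing for instance $x_{\pi(k)}=0$ and $x_k=\mathbb{I}_{k,1}$ turns the above identity into $\mathbb{I}_{k,1}=0$, a contradiction. Hence $\pi(k)=k$ for all $k$, i.e.\ $\pi=\mathrm{id}$.

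I do not expect any genuine obstacle here: once one notices that a trivial peripheral spectrum trivializes the asymptotic map, the conclusion is immediate, and irreducibility is not even needed for it (it would only be required to pin down the stronger statement $M=1$ and $\dim\mathcal{H}_{1,1}=1$). An equivalent and equally short route is spectral: by the cyclic decomposition of Theorem~\ref{thm:cycles}, each $m_\ell$-cycle $\pi_\ell$ contributes, through the channel $\Phi_{\pi_\ell}$ acting as a cyclic permutation of the $\rho_k^{(\ell)}$, the full set of $m_\ell$-th roots of unity to $\spec_P(\Phi)$; thus $\spec_P(\Phi)=\{1\}$ forces every $m_\ell=1$ and therefore $\pi=\mathrm{id}$.
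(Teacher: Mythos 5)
Your argument is correct, but it travels a different road from the paper's. The paper treats this as an immediate consequence of the definition: primitivity combines irreducibility (a unique full\mbox{-}rank fixed state $\sigma$) with $\spec_P(\Phi)=\{1\}$, so $\Attr(\Phi)=\Fix(\Phi)=\mathbb{C}\sigma$ is one\mbox{-}dimensional; hence $M=1$ in~\eqref{attr_struc} and the permutation of a singleton is trivially the identity --- this is exactly the content of Remark~\ref{no_per_rem}. You instead discard irreducibility entirely and use only the triviality of the peripheral spectrum: $\spec_P(\Phi)=\{1\}$ gives $\Attr(\Phi)=\Fix(\Phi)$, so $\hat{\Phi}_P=\mathsf{1}_P$, and then you extract $\pi=\mathrm{id}$ from~\eqref{struc_phiP} by testing against $x_{\pi(k)}=0$, $x_k=\mathbb{I}_{k,1}$ (your alternative spectral route via Theorem~\ref{thm:cycles}, where each $m_\ell$\mbox{-}cycle injects all $m_\ell$\mbox{-}th roots of unity into $\spec_P(\Phi)$, is equally valid). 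What each approach buys: the paper's version is shorter and also yields the sharper conclusion $M=1$, $\dim\mathcal{H}_{1,1}=1$ for primitive channels; yours proves a strictly more general statement, namely that \emph{any} channel with trivial peripheral spectrum --- reducible or not, possibly with a large fixed\mbox{-}point algebra and $M>1$ --- has $\pi=\mathrm{id}$, which neither of the paper's two stated sufficient conditions (primitivity, one\mbox{-}dimensional attractor) covers. The only point worth making explicit in your write\mbox{-}up is that the choice $x_{\pi(k)}=0$, $x_k=\mathbb{I}_{k,1}$ is legitimate because~\eqref{attr_struc} guarantees the $x_j$ range independently over the full blocks $\mathcal{B}(\mathcal{H}_{j,1})$; you gesture at this with ``independent entries,'' and it is indeed all that is needed.
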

		\begin{remark}
			More generally, $\pi=\mathrm{id}$ also for quantum channels with one-dimensional attractor subspace.
			\label{no_per_rem}
		\end{remark}
		The proof follows immediately from the definition of primitive channels. Note that the properties of primitivity and infinite divisibility are independent, although both conditions guarantee $\pi=\mathrm{id}$. More precisely,
		\begin{equation}
			\Phi \mbox{ infinitely divisible} \quad \not\Rightarrow \quad \Phi \mbox{ primitive},
			\label{prim_not_nec}
		\end{equation}
		as confirmed by the one-state contraction channel~\eqref{contr_ch} with $\rho$ non-invertible. Conversely,
		\begin{equation}
			\Phi \mbox{ primitive} \quad\not\Rightarrow\quad \Phi \mbox{ infinitely divisible},
			\label{inf_not_nec}
		\end{equation}
		as it may be proved by taking into account the indivisible unital qubit channel~\cite{wolf2008dividing}
		\begin{equation}
			\Phi(X)=\frac{1}{3}(X^T + \tr(X)\mathbb{I}), \quad X\in \mathcal{B}(\mathbb{C}^2).
			\label{ind_ch_wolf}
		\end{equation}  
		It has the non-degenerate eigenvalue $\lambda=1$ and no other peripheral eigenvalues, namely it is primitive. Incidentally, the qubit channel $\Phi$ has minimum value of the determinant, $\det(\Phi)=-1/27$.
		Thus, from \eqref{prim_not_nec} and \eqref{inf_not_nec}, we can conclude that infinite divisibility and primitivity are not necessary conditions for $\pi=\mathrm{id}$. 
		
		Now, after characterizing and interpreting the absence of permutations in Eq.~\eqref{struc_phiP}, we conclude this Section with a last Proposition which 
		characterizes the unitarity and the Hilbert-Schmidt unitarity of an asymptotic map $\hat{\Phi}_P$. 
		\begin{proposition}
			Let $\Phi$ be a quantum channel with asymptotic map $\hat{\Phi}_P$ of the form~\eqref{struc_phiP}. Then,
			\begin{enumerate}
				\label{ch_unit_HS_unit}
				\item $\hat{\Phi}_P^\dagger=\hat{\Phi}_P^{-1}$, namely $\hat{\Phi}_P$ is Hilbert-Schmidt unitary iff $\| \rho_k \|_{\mathrm{HS}}=\| \rho_{\pi(k)}\|_{\mathrm{HS}}$, for all $k=1,\dots , M$.
				\item $\hat{\Phi}_P(X)=UX U^\dagger$ iff $\rho_k=V_k\rho_{\pi(k)}V_k^\dagger,\,k=1,\dots M$ for some unitaries $V_k: \mathcal{H}_{\pi(k),2} \rightarrow \mathcal{H}_{k,2}$.
			\end{enumerate}
		\end{proposition}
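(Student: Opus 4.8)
The plan is to treat the two statements separately, since (i) reduces to a short composition of known formulas while the real content lies in (ii), which needs a rigidity argument for tensor factorizations.

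For part (i) I would simply compose the explicit expressions \eqref{struc_phiP} and \eqref{phiPd_expr}. Writing $X=0\oplus\bigoplus_k x_k\otimes\rho_k$ and applying $\hat\Phi_P$ followed by $\hat\Phi_P^\dagger$, the unitaries telescope: in the $k$-th block one finds $U_{\pi^{-1}(k)}^\dagger\bigl(U_{\pi^{-1}(k)} x_k U_{\pi^{-1}(k)}^\dagger\bigr)U_{\pi^{-1}(k)}=x_k$, so that
\[
\hat\Phi_P^\dagger \hat\Phi_P(X)=0\oplus\bigoplus_{k=1}^M \frac{\tr(\rho_{\pi^{-1}(k)}^2)}{\tr(\rho_k^2)}\,x_k\otimes\rho_k .
\]
Since $\|\rho\|_{\mathrm{HS}}^2=\tr(\rho^2)$ for the self-adjoint $\rho_k$, requiring this to equal $X$ for every $X\in\Attr(\Phi)$ is equivalent to $\tr(\rho_{\pi^{-1}(k)}^2)=\tr(\rho_k^2)$ for all $k$, i.e.\ (relabelling through the bijection $\pi$) $\|\rho_k\|_{\mathrm{HS}}=\|\rho_{\pi(k)}\|_{\mathrm{HS}}$. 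In finite dimension $\hat\Phi_P^\dagger\hat\Phi_P=\mathsf{1}_P$ is equivalent to $\hat\Phi_P^\dagger=\hat\Phi_P^{-1}$, which settles (i).

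For the ``if'' direction of (ii) I would exhibit the unitary explicitly. Given unitaries $V_k:\mathcal H_{\pi(k),2}\to\mathcal H_{k,2}$ with $\rho_k=V_k\rho_{\pi(k)}V_k^\dagger$ (so in particular $\dim\mathcal H_{k,2}=\dim\mathcal H_{\pi(k),2}$), and recalling from Remark~\ref{per_eq_dim} that $d_k=d_{\pi(k)}$ so that each $U_k$ may be read as an isomorphism $\mathcal H_{\pi(k),1}\to\mathcal H_{k,1}$, define $U$ to be the identity on $\mathcal H_0^\perp$ and to send the block $\mathcal H_{\pi(k),1}\otimes\mathcal H_{\pi(k),2}$ onto $\mathcal H_{k,1}\otimes\mathcal H_{k,2}$ via $U_k\otimes V_k$. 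Because $\pi$ is a bijection and the blocks are mutually orthogonal of matching dimension, $U$ is a well-defined global unitary, and a one-line computation gives, in the $k$-th block of $UXU^\dagger$, the term $U_k x_{\pi(k)}U_k^\dagger\otimes V_k\rho_{\pi(k)}V_k^\dagger=U_k x_{\pi(k)}U_k^\dagger\otimes\rho_k$, which is exactly \eqref{struc_phiP}.

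The substance, and the step I expect to be the main obstacle, is the ``only if'' direction of (ii). Assume $\hat\Phi_P(X)=UXU^\dagger$ for a global unitary $U$. First I would pin down the block action of $U$: applying it to the positive operator $\mathbb I_{\pi(k),1}\otimes\rho_{\pi(k)}$, whose support is exactly the block $\pi(k)$ because $\rho_{\pi(k)}>0$, and comparing supports with \eqref{struc_phiP} forces $U$ to map $\mathcal H_{\pi(k),1}\otimes\mathcal H_{\pi(k),2}$ onto $\mathcal H_{k,1}\otimes\mathcal H_{k,2}$; hence $U$ restricts to block isomorphisms $W_k$, and setting $\tilde W_k=(U_k^\dagger\otimes\mathbb I)W_k$ one obtains unitaries satisfying $\tilde W_k(a\otimes\rho_{\pi(k)})\tilde W_k^\dagger=a\otimes\rho_k$ for all $a\in\mathcal B(\mathcal H_{\pi(k),1})$. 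The crux is to show this intertwining relation forces $\tilde W_k=\mathbb I\otimes V_k$. I would argue in two steps: taking $a=\ketbra{i}{i}$ and using that $\rho_{\pi(k)},\rho_k$ are full rank shows $\tilde W_k$ maps $\ket{i}\otimes\mathcal H_{\pi(k),2}$ onto $\ket{i}\otimes\mathcal H_{k,2}$, so $\tilde W_k=\sum_i\ketbra{i}{i}\otimes V_k^{(i)}$ with each $V_k^{(i)}$ unitary; then taking $a=\ketbra{i}{j}$ yields $V_k^{(i)}\rho_{\pi(k)}(V_k^{(j)})^\dagger=\rho_k$ for all $i,j$, and invertibility of $\rho_{\pi(k)}$ forces all $V_k^{(i)}$ to coincide with a single $V_k$, giving $\rho_k=V_k\rho_{\pi(k)}V_k^\dagger$. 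The delicate point throughout is the rigidity of the tensor decomposition—that the intertwiner cannot mix the two factors—and the positive-definiteness of the $\rho_k$ is precisely what makes this work.
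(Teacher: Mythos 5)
Your argument is correct. For part 1 it coincides with what the paper intends: the paper only says the claim ``readily follows from the definition of Hilbert-Schmidt unitarity'' (and, in Section~\ref{Jex_Wolf}, that it can be read off by comparing \eqref{phiPd_expr} with \eqref{phipdd_expr}), and the computation you carry out --- composing \eqref{struc_phiP} with \eqref{phiPd_expr}, letting the unitaries telescope, and isolating the scalar factors $\tr(\rho_{\pi^{-1}(k)}^2)/\tr(\rho_k^2)$ --- is exactly that omitted step. For part 2 the paper gives no proof at all, deferring to the companion reference \cite{AFK_asympt}, so your self-contained rigidity argument is an addition rather than a reproduction. Both halves of it are sound: the support comparison, which uses the positive definiteness of $\rho_{\pi(k)}$ to force the global unitary $U$ to map the block $\mathcal{H}_{\pi(k),1}\otimes\mathcal{H}_{\pi(k),2}$ onto $\mathcal{H}_{k,1}\otimes\mathcal{H}_{k,2}$, and the two-step analysis with $a=\ketbra{i}{i}$ and $a=\ketbra{i}{j}$ that pins down $\tilde W_k=\mathbb{I}\otimes V_k$. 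One small economy: for the stated conclusion you only need $\rho_k=V_k^{(i)}\rho_{\pi(k)}(V_k^{(i)})^{\dagger}$ for a \emph{single} $i$, which already follows from the diagonal case $a=\ketbra{i}{i}$ together with the unitarity of the restriction of $\tilde W_k$ to $\ket{i}\otimes\mathcal{H}_{\pi(k),2}$; the off-diagonal step is needed only if you want the stronger statement that the intertwiner itself factorizes.
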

		The proof of the first characterization readily follows from the definition of Hilbert-Schmidt unitarity, whereas the proof of the second one was discussed in~\cite{AFK_asympt}.
		\begin{remark}
			From this result it is clear that $\pi=\mathrm{id}$ is not necessary for $\hat{\Phi}_P$ to be either a unitary channel or Hilbert-Schmidt unitary, as already discussed previously. See the example of the qubit channel~\eqref{unit_count}.
		\end{remark}
		\begin{remark}
			As it can be seen by using Theorem~\ref{th:extensiontheorem}
			\begin{equation}
				\| \rho_k \|_{\mathrm{HS}}=\| \rho_{\pi(k)}\|_{\mathrm{HS}} \quad\not\Rightarrow\quad \rho_k=V_k\rho_{\pi(k)}V_k^\dagger,
			\end{equation}
			so that
			\begin{equation}
				\hat{\Phi}_P^\dagger=\hat{\Phi}_P^{-1} \quad\not \Rightarrow\quad \hat{\Phi}_P(X)=UX U^\dagger.
				\label{unit_vs_Hs_unit}
			\end{equation}
		\end{remark}
		
	\end{section}

	\begin{section}{Asymptotics of faithful and non-faithful channels}
		\label{Jex_Wolf}
		
		The aim of this Section is twofold. First, we will generalize several results of~\cite{jex_st_2012,jex_st_2018} to the non-faithful case. Then, we will connect them to the structure theorem~\ref{Wolf_struc}. 
		
		Let us start with the following theorem, expressing mathematically the irreversibility of a non-unitary open-system dynamics~\cite{wolf2012quantum}.
		\begin{theorem}
			Let $\Phi$ be a quantum channel on $\mathcal{B}(\mathcal{H})$. Then $\Phi^\dagger=\Phi^{-1}$ iff $\Phi(X)=UXU^\dagger$, with $U$ unitary, i.e.\ $\Phi$ is a unitary channel.
		\end{theorem}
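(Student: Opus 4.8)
The plan is to prove the two implications separately, disposing of the ``if'' direction by a direct computation and extracting the ``only if'' direction from the structure Theorem~\ref{Wolf_struc}.

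For the ``if'' direction, I would assume $\Phi(X)=UXU^\dagger$ with $U$ unitary and compute the Hilbert--Schmidt adjoint directly: from $\braket{A}{\Phi(B)}_{\mathrm{HS}}=\Tr(A^\dagger U B U^\dagger)=\Tr\big((U^\dagger A U)^\dagger B\big)$ one reads off $\Phi^\dagger(A)=U^\dagger A U$, so that $\Phi^\dagger\Phi=\mathsf{1}=\Phi\Phi^\dagger$ and hence $\Phi^\dagger=\Phi^{-1}$.

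For the converse, the key observation is that $\Phi^\dagger=\Phi^{-1}$ says exactly that $\Phi$ is a unitary operator on the finite-dimensional Hilbert space $(\mathcal{B}(\mathcal{H}),\braket{\cdot}{\cdot}_{\mathrm{HS}})$. Being unitary, $\Phi$ is normal, hence diagonalizable, with every eigenvalue of modulus one; therefore $\spec(\Phi)=\spec_{P}(\Phi)$ and its eigenvectors span the whole operator space, so that $\Attr(\Phi)=\mathcal{B}(\mathcal{H})$. I would then feed this into Theorem~\ref{Wolf_struc}, which forces the decomposition~\eqref{attr_struc} of the attractor to exhaust all of $\mathcal{B}(\mathcal{H})$.

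The crux of the argument, and the step carrying the real content, is then a dimension count. Writing $d_k=\dim\mathcal{H}_{k,1}$ and $m_k=\dim\mathcal{H}_{k,2}$, one has $\dim\Attr(\Phi)=\sum_{k=1}^M d_k^2$ while $\dim\mathcal{B}(\mathcal{H})=d^2$ with $d=\dim\mathcal{H}_0^\perp+\sum_{k=1}^M d_k m_k$. The chain of inequalities $\sum_k d_k^2\le\big(\sum_k d_k\big)^2\le\big(\sum_k d_k m_k\big)^2\le d^2$ must collapse to equalities, which (since every $d_k,m_k\ge1$) forces simultaneously $M=1$, $m_1=1$, and $\mathcal{H}_0^\perp=0$. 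Thus $\mathcal{H}=\mathcal{H}_{1,1}$, the permutation $\pi$ is trivial on a single index, and~\eqref{struc_phiP} reduces to $\Phi(X)=\hat{\Phi}_P(X)=U_1 X U_1^\dagger$, exhibiting $\Phi$ as the unitary channel with $U=U_1$. I expect the only delicate point to be justifying that $\Attr(\Phi)=\mathcal{B}(\mathcal{H})$ is genuinely incompatible with any nontrivial tensor or block structure, which the above inequalities settle cleanly.
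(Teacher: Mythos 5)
Your argument is correct. Note first that the paper itself offers no proof of this statement: it is quoted from the literature (the citation to Wolf's lecture notes), so there is no internal proof to compare against. The standard argument behind that citation works at the level of Kraus operators: writing $\Phi(X)=\sum_i A_iXA_i^\dagger$, the condition $\Phi^\dagger\Phi=\mathsf{1}$ forces the Kraus operators $A_j^\dagger A_i$ of $\Phi^\dagger\Phi$ to be proportional to $\mathbb{I}$, which collapses the Kraus family to a single unitary. Your route is genuinely different and fits the spirit of Section~\ref{Jex_Wolf}, where known facts are rederived from Theorem~\ref{Wolf_struc}: you observe that $\Phi^\dagger=\Phi^{-1}$ makes $\Phi$ a Hilbert--Schmidt unitary, hence normal with unimodular spectrum, so $\Attr(\Phi)=\mathcal{B}(\mathcal{H})$, and then the dimension count
\begin{equation*}
\sum_{k=1}^M d_k^2\;\leqslant\;\Bigl(\sum_{k=1}^M d_k\Bigr)^{2}\;\leqslant\;\Bigl(\sum_{k=1}^M d_k m_k\Bigr)^{2}\;\leqslant\;d^2
\end{equation*}
against $\dim\Attr(\Phi)=\dim\mathcal{B}(\mathcal{H})$ correctly forces $M=1$, $m_1=1$, $\mathcal{H}_0^\perp=0$, after which \eqref{struc_phiP} degenerates to $\Phi(X)=U_1XU_1^\dagger$ on all of $\mathcal{B}(\mathcal{H})$. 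Each equality case is justified because $d_k,m_k\geqslant 1$, and the ``if'' direction is a one-line computation. What the Kraus-operator proof buys is independence from the heavy structure theorem; what your proof buys is consistency with the paper's programme of extracting everything from Theorem~\ref{Wolf_struc}, and it makes transparent exactly which structural degeneracies ($M>1$, $m_k>1$, $\mathcal{H}_0^\perp\neq 0$, nontrivial $\pi$) are excluded by invertibility of the adjoint.
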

		\begin{remark}
			As already discussed in the previous section [see Eq.~\eqref{unit_vs_Hs_unit}], this is not true in general for the asymptotic map $\hat{\Phi}_P$.
		\end{remark}
		More generally, given a quantum channel $\Phi$ and a subset $\mathcal{M} \subseteq \mathcal{S}(\mathcal{H})$, we can look for another quantum channel $\Phi^{\prime}$ such that
		\begin{equation}
			\Phi^{\prime}\Phi(X)=X, \quad X \in \mathcal{M}.
		\end{equation}
		If such channel exists, then $\Phi$ is said to be sufficient (or reversible) with respect to the subset $\mathcal{M}$, and $\Phi^{\prime}$ is called the recovery map of $\Phi$ for $\mathcal{M}$~\cite{junge2018universal}.  Interestingly, this concept finds applications in quantum error correcting codes~\cite{ng2010simple}.
		
		For instance, at least for faithful channels with full-rank fixed state $\sigma$, it turns out that~\cite{jex_st_2018} 
		\begin{equation}
			\Phi\Phi^{\ddagger}(X)=\Phi^{\ddagger}\Phi(X)=X,\quad X\in \Attr(\Phi),
			\label{unit-1/2}
		\end{equation}
		where $\Phi^{\ddagger}$ is the adjoint map of $\Phi$ with respect to the scalar product
		\begin{equation}
			\begin{split}
				\braket{A}{B}_{1/2}&=\braket*{A}{\sigma^{-1/2}B\sigma^{-1/2}}_{\mathrm {HS}}=\tr(A^{\dagger}\sigma^{-1/2}B\sigma^{-1/2})=\tr(\sigma^{-1/2}A^{\dagger}\sigma^{-1/2}B)\\&=\tr(\sigma^{-1}\sigma^{1/2}A^{\dagger}\sigma^{-1/2}B)=\tr(\sigma^{-1}A^{\ddagger}B),\quad A,B\in \mathcal{B}(\mathcal{H}),
			\end{split}
			\label{1/2-sc}
		\end{equation}
		i.e.\ a generalized Hilbert-Schmidt scalar product involving a modified adjoint operation on $\mathcal{B}(\mathcal{H})$ defined by
		\begin{equation}
			B^{\ddagger}=\sigma^{1/2}B^{\dagger}\sigma^{-1/2},\quad B\in \mathcal{B}(\mathcal{H}).
		\end{equation}
		The map $\Phi^{\ddagger}$ explicitly reads
		\begin{equation}
			\Phi^{\ddagger}(X)=\sum_{k=1}^{N}\sigma^{1/2}A_{k}^{\dagger}\sigma^{-1/2} X \sigma^{-1/2}A_{k}\sigma^{1/2},\quad X\in \mathcal{B}(\mathcal{H}),
			\label{phidd_expr}
		\end{equation}
		with $\{ A_{k} \}_{k=1}^{N}$ being a system of Kraus operators of $\Phi$. It is clear that $\Phi^\ddagger$ is a quantum channel. Note that $\Phi^\ddagger$ is a particular example of Petz recovery maps~\cite{petz1986sufficient,petz1988sufficiency}.
		
		Furthermore, it is worthwhile to observe that 
		\begin{equation}
			\Attr(\Phi)=\Attr(\Phi^{\ddagger}),
			\label{attr_phi_phid} 
		\end{equation}
		as a consequence of a one-to-one correspondence between eigenspaces corresponding to peripheral eigenvalues of $\Phi$ and $\Phi^{\dagger}$ stated in the following Proposition~\cite[Section 4.3]{jex_st_2018}.
		\begin{proposition}
			Let $\Phi$ be a faithful quantum channel with full-rank fixed state $\sigma$. Then, if $\lambda$ is a peripheral eigenvalue, i.e.\ $|\lambda|=1$, 
			\begin{equation}
				X_{\lambda} \in \Ker(\Phi-\lambda\mathsf{1}) \quad\Leftrightarrow\quad \sigma^{-1/2}X_{\lambda} \sigma^{-1/2} \in \Ker(\Phi^{\dagger}-\bar{\lambda}\mathsf{1}).
			\end{equation}
			\label{eigen_phi_phid}
		\end{proposition}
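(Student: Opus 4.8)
The plan is to realize the conjugation $X\mapsto \sigma^{-1/2}X\sigma^{-1/2}$ as an explicit similarity between $\Phi^{\dagger}$ and the modular adjoint $\Phi^{\ddagger}$, and then to exploit the restricted invertibility~\eqref{unit-1/2} of $\Phi$ on its attractor manifold. First I would record the similarity relation. Writing $\Xi(Z)=\sigma^{1/2}Z\sigma^{1/2}$, so that $\Xi^{-1}(X)=\sigma^{-1/2}X\sigma^{-1/2}$, a direct computation from the Kraus form $\Phi^{\dagger}(Z)=\sum_{k}A_{k}^{\dagger}ZA_{k}$ and the explicit expression~\eqref{phidd_expr} gives
\[
\Phi^{\ddagger}=\Xi\circ\Phi^{\dagger}\circ\Xi^{-1}.
\]
Hence $\Phi^{\dagger}$ and $\Phi^{\ddagger}$ are conjugate superoperators, and for every $\mu\in\mathbb{C}$ one has $X\in\Ker(\Phi^{\ddagger}-\mu\mathsf{1})$ if and only if $\sigma^{-1/2}X\sigma^{-1/2}\in\Ker(\Phi^{\dagger}-\mu\mathsf{1})$. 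Taking $\mu=\bar{\lambda}$, the statement of the Proposition reduces to proving, for $|\lambda|=1$, the identity $\Ker(\Phi-\lambda\mathsf{1})=\Ker(\Phi^{\ddagger}-\bar{\lambda}\mathsf{1})$.

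To establish this identity I would use~\eqref{unit-1/2} twice, exploiting that $\Phi^{\ddagger}$ is again a faithful channel with the \emph{same} fixed state $\sigma$ and that $\ddagger$ is an involution, so that $(\Phi^{\ddagger})^{\ddagger}=\Phi$. The first fact follows by checking $\Phi^{\ddagger}(\sigma)=\sigma$ and that~\eqref{phidd_expr} is a Kraus decomposition with operators $B_{k}=\sigma^{1/2}A_{k}^{\dagger}\sigma^{-1/2}$ obeying $\sum_{k}B_{k}^{\dagger}B_{k}=\sigma^{-1/2}\Phi(\sigma)\sigma^{-1/2}=\mathbb{I}$; the involutivity is immediate since $\ddagger$ is the adjoint with respect to the genuine inner product $\braket{\cdot}{\cdot}_{1/2}$ (genuine because $\sigma>0$). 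For the inclusion $\subseteq$, take $X\in\Ker(\Phi-\lambda\mathsf{1})\subseteq\Attr(\Phi)$; then~\eqref{unit-1/2} gives $\Phi^{\ddagger}\Phi(X)=X$, whence $\Phi^{\ddagger}(X)=\lambda^{-1}X=\bar{\lambda}X$, using $\lambda^{-1}=\bar{\lambda}$ for $|\lambda|=1$. For the reverse inclusion, take $X\in\Ker(\Phi^{\ddagger}-\bar{\lambda}\mathsf{1})\subseteq\Attr(\Phi^{\ddagger})$ and apply~\eqref{unit-1/2} to the channel $\Phi^{\ddagger}$, whose modular adjoint is $(\Phi^{\ddagger})^{\ddagger}=\Phi$: this yields $\Phi\,\Phi^{\ddagger}(X)=X$ on $\Attr(\Phi^{\ddagger})$, hence $\Phi(X)=\bar{\lambda}^{-1}X=\lambda X$. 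Combining the two inclusions proves the reduced identity, and therefore the Proposition.

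The only genuinely delicate point is the reverse inclusion, which \emph{cannot} be obtained by the naive remark that the eigenspaces of an invertible operator and of its inverse coincide: that remark applies only on the subspace where $\Phi$ and $\Phi^{\ddagger}$ are actually mutually inverse, namely $\Attr(\Phi)$, and a priori $\Attr(\Phi)\neq\Attr(\Phi^{\ddagger})$. Indeed, the equality $\Attr(\Phi)=\Attr(\Phi^{\ddagger})$ of~\eqref{attr_phi_phid} is precisely a corollary of this very Proposition, so invoking it here would be circular. The way around the obstacle is to treat $\Phi$ and $\Phi^{\ddagger}$ symmetrically, applying~\eqref{unit-1/2} to each of them on its own attractor manifold; it is the involutivity $(\Phi^{\ddagger})^{\ddagger}=\Phi$, together with $\lambda^{-1}=\bar{\lambda}$, that makes the two applications match and closes the argument.
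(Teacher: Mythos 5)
Your proof is correct, but it follows a genuinely different route from the paper's. The paper derives Proposition~\ref{eigen_phi_phid} from the structure Theorem~\ref{Wolf_struc}: it writes $\Phi$ and $\Phi^{\dagger}$ on their attractor subspaces in block form, $\hat\Phi_P(X)=\bigoplus_k U_k x_{\pi(k)}U_k^\dagger\otimes\rho_k$ and $\Phi^{\dagger}(X)=\bigoplus_k U_{\pi^{-1}(k)}^{\dagger}x_{\pi^{-1}(k)}U_{\pi^{-1}(k)}\otimes\mathbb{I}_{k,2}$, and uses the explicit form~\eqref{max_rank_dec} of the maximum-rank fixed state $\sigma=\bigoplus_k\sigma_k\otimes\rho_k$ (up to normalization) to check block by block that conjugation by $\sigma^{-1/2}$ carries the eigenvalue equation $U_kx_{\pi(k)}U_k^\dagger=\lambda x_k$ into the eigenvalue equation for $\Phi^\dagger$ with eigenvalue $\bar\lambda$. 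You instead bypass the structure theorem entirely: the similarity $\Phi^{\ddagger}=\Xi\circ\Phi^{\dagger}\circ\Xi^{-1}$ is an immediate Kraus-level computation from~\eqref{phidd_expr}, and the reduction to $\Ker(\Phi-\lambda\mathsf{1})=\Ker(\Phi^{\ddagger}-\bar\lambda\mathsf{1})$ is closed by applying~\eqref{unit-1/2} symmetrically to $\Phi$ and to $\Phi^{\ddagger}$, using $(\Phi^{\ddagger})^{\ddagger}=\Phi$ and $\Phi^{\ddagger}(\sigma)=\sigma$; your handling of the reverse inclusion, and your refusal to invoke~\eqref{attr_phi_phid} (which is a corollary of the Proposition), are exactly right. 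What each approach buys: yours is shorter and needs no block decomposition, but it consumes~\eqref{unit-1/2} as an external input, whereas the paper's computation is self-contained once Theorem~\ref{Wolf_struc} is granted — indeed the paper's explicit aim in that passage is to show the Proposition follows from the structure theorem, and it then runs the logic in the opposite direction, deriving~\eqref{unit-1/2} from~\eqref{UNit-phip}.

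One caveat you should make explicit: within this paper's own chain of deductions, the internal derivation of~\eqref{unit-1/2} from the structure theorem passes through~\eqref{P_ddagger}, hence through~\eqref{attr_rel} and~\eqref{attr_phi_phid}, which are downstream of the very Proposition you are proving. Your argument is therefore non-circular only under the reading that~\eqref{unit-1/2} is taken as an independently established fact from the cited literature (as the paper presents it, well before the Proposition). Since that reading is legitimate, this is a dependency to be declared rather than a gap, but your proof cannot serve as a replacement for the paper's stated goal of obtaining the Proposition directly from Theorem~\ref{Wolf_struc}.
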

		
		The following simple computation
		\begin{equation}
			\Phi^{\ddagger}(X_{\lambda})=\sum_{k=1}^{N}\sigma^{1/2} A_{k}^{\dagger}\sigma^{-1/2}X_{\lambda}\sigma^{-1/2}A_{k}\sigma^{1/2}=\bar{\lambda}\sigma^{1/2}\sigma^{-1/2}X_{\lambda}\sigma^{-1/2}\sigma^{1/2}=\bar{\lambda}X_{\lambda},
			\label{Kr_Phidd}
		\end{equation}
		thanks to Proposition~\ref{eigen_phi_phid} and the Kraus representation of $\Phi^{\dagger}$,
		\begin{equation}
			\Phi^{\dagger}(X)=\sum_{k=1}^{N}A_{k}^{\dagger} X A_{k},\quad X\in\mathcal{B}(\mathcal{H}),
		\end{equation}
		yields~\eqref{attr_phi_phid}.

		Now, the result~\eqref{unit-1/2} may be extended to the non-faithful case. To this purpose, define the map $\tilde{\Phi}:\mathcal{B}(\mathcal{H}_0)\rightarrow \mathcal{B}(\mathcal{H}_0)$ as follows:
		
		\begin{equation}
			\Phi(X)=0\oplus \tilde{\Phi}(\tilde{X}),\quad X=0\oplus \tilde{X}\in 0\oplus \mathcal{B}(\mathcal{H}_0).
			\label{dec_phi_phit}
		\end{equation}
		
		Note that $\tilde{\Phi}$ is a map between $\mathcal B(\mathcal H_0)$ and $\mathcal B(\mathcal H_0)$ because $0\oplus\mathcal B(\mathcal H_0)$ is invariant under $\Phi$. 
	
	It turns out that $\tilde{\Phi}$ is a faithful quantum channel, which we can call the induced faithful channel of $\Phi$. That being said, from~\eqref{Attr_struc} and~\eqref{dec_phi_phit} we have
		\begin{equation}
			\Attr(\Phi)=0\oplus \Attr(\tilde{\Phi}).
			\label{attr:phi_phit}
		\end{equation}
		Now, we can extend to the non-faithful case the definition of $\Phi^{\ddagger}$ by inserting a maximum-rank fixed state $\sigma$ with
		\begin{equation}
			\sigma^{-1}|_{\Ker(\sigma)}=\sigma^{-1}|_{\mathcal H_0^\perp}=0
		\end{equation}
		in~\eqref{phidd_expr}. The map $\Phi^{\ddagger}$ is a quantum operation, i.e. a trace non-increasing completely positive map, and, again, we can define the induced faithful channel $\widetilde{\Phi^{\ddagger}}$ of $\Phi^{\ddagger}$ via
		\begin{equation}
			\Phi^{\ddagger}(X)=0\oplus \widetilde{\Phi^{\ddagger}}(\tilde{X})=0\oplus \tilde{\Phi}^{\ddagger}(\tilde{X}),\quad X=0\oplus \tilde{X} \in 0 \oplus \mathcal{B}(\mathcal{H}_{0}),
		\end{equation}
		where the last step readily follows from the definitions. 
		Ultimately, the attractor subspace reads
		\begin{equation}
			\Attr(\Phi^{\ddagger})=0\oplus\Attr(\widetilde{\Phi^{\ddagger}}) =0\oplus\Attr(\tilde{\Phi}^{\ddagger}) =0\oplus \Attr(\tilde{\Phi})
			=\Attr(\Phi),
			\label{attr_rel}
		\end{equation}
	
		and~\eqref{unit-1/2} can be generalized to arbitrary channels.
		
		Now, let us turn our attention to the relation between the result~\eqref{unit-1/2} and the structure of the asymptotic map, Theorem~\ref{Wolf_struc}. To this purpose, let us define the double adjoint of the asymptotic map $\hat{\Phi}_{P}$, $\hat{\Phi}_{P}^{\ddagger}:\Attr(\Phi)\rightarrow \Attr(\Phi)$ by
		\begin{equation}
			\braket*{A}{\hat{\Phi}_{P}(B)}_{1/2}=\braket*{\hat{\Phi}_{P}^{\ddagger}(A)}{B}_{1/2},\quad A , B \in \Attr(\Phi),
		\end{equation}
		from which it is possible to obtain, using~\eqref{struc_phiP},
		\begin{equation}
			\hat{\Phi}_{P}^{\ddagger}(X)=0\oplus \bigoplus_{k=1}^{M} U_{\pi^{-1}(k)}^{\dagger}x_{\pi^{-1}(k)}U_{\pi^{-1}(k)} \otimes \rho_{k},\quad X \in \Attr(\Phi).
			\label{phipdd_expr}
		\end{equation}
		
		Then, it is immediate to check that
		\begin{equation}
			\hat{\Phi}_{P}^{\ddagger}\hat{\Phi}_{P}=\hat{\Phi}_{P}\hat{\Phi}_{P}^{\ddagger}=\mathsf{1}_{P}.
			\label{UNit-phip}
		\end{equation}
		So we can conclude that the asymptotic map $\hat{\Phi}_{P}$ is a unitary operator on the attractor subspace $\Attr(\Phi)$ with respect to the modified scalar product~\eqref{1/2-sc}, depending on the maximum-rank invariant state $\sigma$ of $\Phi$. 
		
		This is in line with the fact that $\hat{\Phi}_P$ is not generally Hilbert-Schmidt unitary and, consequently, a unitary channel, as explained in the previous Section. In addition, by comparing the expressions~\eqref{phiPd_expr} and~\eqref{phipdd_expr} for $\hat{\Phi}_P^\dagger$ and $\hat{\Phi}_P^\ddagger$ respectively, we obtain the first implication of Proposition~\ref{ch_unit_HS_unit}.  
		
	Also, observe that
		\begin{equation}
			(\widehat{\Phi^{\ddagger}})_{P}=\hat{\Phi}_{P}^{\ddagger},
			\label{P_ddagger}
		\end{equation}
		or, equivalently,
		\begin{equation}
			\Phi^{\ddagger}\Attr(\Phi)=\Attr(\Phi),
		\end{equation}
		as a consequence of~\eqref{attr_rel} and~\eqref{Attr_inva}.
		
		Now, let us prove that~\eqref{UNit-phip} is equivalent to~\eqref{unit-1/2}. Suppose that~\eqref{unit-1/2} holds. Then
		\begin{equation}
			\begin{split}
				\braket*{\Phi^{\ddagger}\Phi(X)}{Y}_{1/2}
				=\braket*{\hat{\Phi}_{P}(X)}{\hat{\Phi}_{P}(Y)}_{1/2}=\braket{X}{Y}_{1/2} ,\quad X,Y \in \Attr(\Phi),
			\end{split}
		\end{equation}
		which gives~\eqref{UNit-phip}, if one takes into account the invertibility of $\hat{\Phi}_{P}$. The converse implication follows immediately from~\eqref{P_ddagger}
		\begin{equation}
			\Phi^\ddagger\Phi(X)=(\widehat{\Phi^\ddagger})_P\hat{\Phi}_P(X)=\hat{\Phi}_P^\ddagger\hat{\Phi}_P(X)=X,\quad X\in \Attr(\Phi),
		\end{equation}
		and analogously for the other equality.
		To sum up, as anticipated, Theorem~\ref{Wolf_struc} implies~\eqref{unit-1/2}. 
		
	Moreover, from Theorems~\ref{Wolf_struc} and~\ref{eigen_phi_phid}, it follows that, in the faithful case,
		\begin{equation}
			\Attr(\Phi^{\dagger})=\sigma^{-1/2}\Attr(\Phi)\sigma^{-1/2}=\bigoplus_{k=1}^{M} \mathcal{B}(\mathcal{H}_{k,1}) \otimes \mathbb{I}_{k,2},
			\label{fin_alg}
		\end{equation}
		as observed in Remark~\ref{faith}.
		
	Incidentally, notice that in the non-faithful case $\Attr(\Phi^{\dagger})$ is not generally an algebra. More precisely, as an immediate consequence of~\cite[Proposition 3]{albert2019asymptotics}, given $\tilde{X}\in \Attr(\tilde{\Phi}^{\dagger})$ there exists $X\in \Attr(\Phi^{\dagger})$ such that $QXQ=\tilde{X}$, where $Q$ is the projection onto $\mathcal{H}_0$.
		
		In addition, Proposition~\ref{eigen_phi_phid} may be easily proved from Theorem~\ref{Wolf_struc}. The adjoint map $\Phi^{\dagger}$ of a faithful channel $\Phi$
	acts on its attractor subspace $\Attr(\Phi^{\dagger})$ as follows
		\begin{equation}
			\Phi^{\dagger}(X)=\bigoplus_{k=1}^{M} U_{\pi^{-1}(k)}^{\dagger}x_{\pi^{-1}(k)}  U_{\pi^{-1}(k)} \otimes \mathbb{I}_{k,2},\quad X=\bigoplus_{k=1}^{M} x_{k} \otimes \mathbb{I}_{k,2}\in \Attr(\Phi^{\dagger}).
		\end{equation}
		Now, the eigenvalue equation for $\Phi$
		\begin{equation}
			\Phi(X)=\lambda X,\quad |\lambda|=1,
		\end{equation}
		namely the condition
		\begin{equation}
			U_{k}x_{\pi(k)}U_{k}^{\dagger}=\lambda x_{k},\quad k=1,\dots , M,
		\end{equation}
		is equivalent to require that
		\begin{equation}
			U_{\pi^{-1}(k)}^{\dagger} \sigma_{\pi^{-1}(k)}^{-1/2} x_{\pi^{-1}(k)} \sigma_{\pi^{-1}(k)}^{-1/2} U_{\pi^{-1}(k)}= \bar{\lambda} \sigma_{k}^{-1/2} x_{k}\sigma_{k}^{-1/2},\quad k=1,\dots , M,
			\label{eigen_phid}
		\end{equation}
		thanks to~\eqref{max_rank_dec}.
		
		In conclusion,~\eqref{eigen_phid} is an explicit way of writing
		\begin{equation}
			\Phi^{\dagger}(\sigma^{-1/2}X \sigma^{-1/2})=\bar{\lambda} \sigma^{-1/2}X\sigma^{-1/2},
		\end{equation} 
		i.e.\ the statement. 
		Finally, note that similar bijective mappings between the eigenspaces $\Ker( \Phi-\lambda\mathsf{1} )$ and $\Ker(\Phi^{\dagger}-\bar{\lambda}\mathsf{1})$, established in Theorem 3.1 of~\cite{jex_st_2012} and Theorem 2 of~\cite{jex_st_2018} (see also Lemma 3 of~\cite{burgarth_ergodic}), may be proved in the same way by means of Theorem~\ref{Wolf_struc}.  
		
	\end{section}
	\begin{section}{Conclusions}
		\label{concl}	
		In this Article we have explored several aspects of the asymptotic dynamics of quantum channels of finite-dimensional systems. The starting point of our findings is a structure theorem that provides an expression for the asymptotic map. It is the restriction of the quantum channel generating the discrete-time dynamics to the attractor subspace, where the evolution takes place at large times.  
		
		Four main goals were achieved in this paper.
		First, a structure theorem of the asymptotic map in terms of disjoint cyclic permutations was given (Sec.~\ref{aternativeexpression}). Second, a quantum channel unfolding a given asymptotic map  
		was constructed, thus 
		finding the converse to the structure theorem (Sec.~\ref{extens_th}). Third, the role of permutations in the asymptotic dynamics was understood and characterized (Sec.~\ref{per_role}). Fourth, several properties of the asymptotics of quantum channels were derived from the structure theorem of the asymptotic map (Sec.~\ref{Jex_Wolf}). 
		
		For what concerns the third aim, the asymptotic dynamics is not purely unitary, since partial permutations between the factors of the attractor manifold may occur. However, the absence of permutations does not characterize
		the unitarity of the asymptotic map, but a weaker divisibility property, as stated in Proposition~\ref{ch_pi_id}. An almost immediate consequence of this result is that the asymptotic dynamics of an infinitely divisible channel, such as a Markovian or an idempotent channel, has no permutations. 
		
		Therefore, given a discrete-time semigroup $(\Phi^n)_{n\in\mathbb{N}}$ associated with a quantum channel $\Phi$, the asymptotic map will be unitary if we look at the coarse-grained dynamics generated by the quantum channel $\Phi^M$, with $M$ being the least common multiple of the lengths of the cycles of the permutation~\cite{AFK_asympt}. To conclude, the occurrence of permutations is related to long-term non-Markovian effects, making the asymptotic dynamics generally non-unitary. 
		Notice also that permutations can take place only in the presence of a degenerate attractor subspace (see Remark~\ref{no_per_rem}).
		It is also worthwhile to observe that, as revealed by Eq.~\eqref{UNit-phip}, permutations do not even guarantee the Hilbert-Schmidt unitarity of the asymptotic map, physically implying that the purity of the asymptotic states is not generally conserved under the dynamics. 
		
		These results shed light on the crucial role of the asymptotic dynamics of a quantum channel in quantum technology applications. With this respect it would be interesting to better understand its relation with entanglement, a fundamental resource in quantum information processes, and in particular its role in the class of eventually entanglement breaking~\cite{hanson2020eventually} and entanglement saving~\cite{lami2016entanglement} channels. Such maps generalize entanglement breaking channels, deeply studied in the last twenty years~\cite{horodecki2003entanglement,rahaman2018eventually,christandl2019composed} because of their detrimental effects in quantum communication protocols~\cite{bauml2015limitations,christandl2017private}.
		
	\end{section}
	
	\section*{Acknowledgments}
	This work was partially supported by Istituto Nazionale di Fisica Nucleare (INFN) through the project ``QUANTUM'' and the Italian National Group of Mathematical Physics (GNFM-INdAM). 

	\appendix
	
	\begin{section}{Technical proofs}
		\label{proofs}
		\begin{subsection}{Proof of Eq.~\eqref{suppPpsuppP}}
			\label{SUPPPSUPPP}
			We want to prove that
			\begin{equation}
				\label{eq:H=H'}
				\Ran \mathcal P_P(\mathbb I)=\supp \mathcal P_P(\mathbb I)=\mathcal H_0.
			\end{equation}
			We start by proving that
			\begin{equation}
				Q^\perp\mathcal{P}_{P}(\mathbb{I})Q^\perp=0,
				\label{th_1}
			\end{equation}
			where $Q^\perp$ is the projection onto the orthogonal complement $\mathcal{H}_{0}^\perp$ of $\mathcal{H}_0$. 
			Since $Q^\perp\mathcal{P}_{P}(\mathbb{I})Q^\perp \geqslant 0$, it is sufficient to prove that
			\begin{equation}
				\tr(Q^\perp\mathcal{P}_{P}(\mathbb{I})Q^\perp)=0.
				\label{th_2}
			\end{equation}
			We know that, by using~\eqref{P_for},
			\begin{equation}
				Q^\perp\mathcal{P}(\mathbb{I})Q^\perp=0 \quad\Leftrightarrow\quad \lim_{N\rightarrow \infty}\frac{1}{N} Q^\perp \sum_{n=1}^{N}\Phi^{n}(\mathbb{I})Q^\perp =0,
			\end{equation}
			which implies that
			\begin{equation}
				\lim_{N\rightarrow \infty}\frac{1}{N} \tr(Q^\perp \sum_{n=1}^{N}\Phi^{n}(\mathbb{I})Q^\perp ) =0.
			\end{equation}
			That being said, compute for a given $n\in\mathbb{N}$,
			\begin{equation}
				\begin{split}
					&\tr ( Q^\perp \Phi^{n+1}(\mathbb{I})Q^\perp )=\tr ( Q^\perp \Phi((Q+Q^\perp )\Phi^{n}(\mathbb{I})(Q+Q^\perp))Q^\perp)\\&=\tr (Q^\perp \Phi(Q\Phi^{n}(\mathbb{I})Q^\perp +Q^\perp\Phi^{n}(\mathbb{I})Q+Q^\perp\Phi^{n}(\mathbb{I})Q^\perp )Q^\perp)+\tr ( Q^\perp \Phi(Q\Phi^{n}(\mathbb{I})Q)Q^\perp) 
					\\&\leqslant \tr ( Q\Phi^{n}(\mathbb{I})Q^\perp +Q^\perp \Phi^{n}(\mathbb{I})Q )+\tr( Q^\perp\Phi^{n}(\mathbb{I})Q^\perp  ) = \tr ( Q^\perp\Phi^{n}(\mathbb{I})Q^\perp ),
				\end{split}
			\end{equation}
			where the inequality arises from the fact that the map $\Phi_{Q^\perp}(X)=Q^\perp X Q^\perp$ is trace non-increasing and $0\oplus\mathcal{B}(\mathcal{H}_0)$ is an invariant subspace for $\Phi$ (see Eq.~\eqref{dec_phi_phit}). So we can write
			\begin{equation}
				\frac{1}{n_{i}} \tr ( Q^\perp \sum_{n=1}^{n_i} \Phi^{n}(\mathbb{I})Q^\perp ) \geqslant  \frac{1}{n_{i}} n_i \tr ( Q^\perp\Phi^{n_i}(\mathbb{I})Q^\perp ) \geqslant 0,
			\end{equation}
			for the increasing subsequence $\{ n_i \}_{i\in\mathbb{N}}$ in~\eqref{P_p_for}, yielding~\eqref{th_2} by squeeze theorem and, consequently,~\eqref{th_1}. Finally, from equation~\eqref{th_1} we have, for $\phi \in \mathcal H$
			\begin{equation}
				\mel{\phi}{Q^\perp \mathcal P_P(\mathbb I)Q^\perp}{\phi}=\norm*{(\mathcal P_P(\mathbb I))^{1/2}Q^\perp\phi}^2=0,
			\end{equation}
			implying $(\mathcal P_P(\mathbb I))^{1/2}Q^\perp=0$. Then, by post-composing with $(\mathcal P_P(\mathbb I))^{1/2}$, we get $\mathcal P_P(\mathbb I)Q^\perp=0$, and, by taking the Hermitian conjugate, $Q^\perp\mathcal P_P(\mathbb I)=0$, i.e.~\eqref{eq:H=H'}.
		\end{subsection}
		\begin{subsection}{Proof of Eq.~\eqref{PpneqP}}
			\label{PPNEQP}
			By~\eqref{Jordan_dec} it is straightforward to prove that the equality $\mathcal{P}_P(\mathbb{I})=\mathcal{P}(\mathbb{I})$ is equivalent to
			\begin{equation}
				\Phi(\mathcal{P}_P(\mathbb{I}))=\mathcal{P}_P(\mathbb{I}),
				\label{eq:eigenvaluepp}
			\end{equation}
			which explicitly reads, in the faithful case,
			\begin{equation}
				\bigoplus_{k=1}^{M}  m_{\pi(k)}\mathbb{I}_{k,1} \otimes \rho_{k}= \bigoplus_{k=1}^{M}  m_{k}\mathbb{I}_{k,1} \otimes \rho_{k} \quad\Leftrightarrow\quad m_{\pi(k)}=m_k,\quad  k=1,\dots , M.
				\label{PpeqP}
			\end{equation}
			However, it is always possile to construct a channel for which~\eqref{PpeqP} does not hold through Theorem~\ref{th:extensiontheorem}.
		\end{subsection}
		\begin{subsection}{Complete positivity of the map $\Phi_0$ in~\eqref{eq:Phi0}}
			\label{CP_Phi_0}
			By definition of complete positivity, let us consider a $D$-dimensional Hilbert space $\mathcal H^\prime$, and check that the map
			\begin{align}
				\Phi_\mathrm{0}\otimes \mathsf{1}_{\mathcal B(\mathcal{H}^\prime)}:\mathcal{A}\otimes  \mathcal{B}(\mathcal{H}^\prime)\mapsto \mathcal{C}\otimes \mathcal{B}(\mathcal{H}^\prime)
			\end{align}
			is positive, with $\mathcal{A}$ and $\mathcal{C}$ denoting the algebras
			\begin{align}
				&\mathcal{A}=\mathcal B(\mathcal H_0^\perp)\oplus\bigoplus_{k=1}^M  \mathcal B(\mathcal H_{k,1}\otimes\mathcal H_{k,2}),\\
				&\mathcal{C}=0\oplus\bigoplus_{k=1}^M \mathcal B(\mathcal H_{k,1}\otimes\mathcal H_{k,2}).
			\end{align}
			Let us take a positive operator $Z$ over $\mathcal{A}\otimes \mathcal{B}(\mathcal{H}^\prime)$, and write it explicitly as
			\begin{equation}
				Z=\sum_{\alpha,\beta=1}^D Z_{\alpha\beta}\otimes \ketbra{\alpha}{\beta},
			\end{equation}
			with $Z_{\alpha\beta}\in\mathcal{A}$ and $\{\ket\alpha\}_{\alpha=1}^D$ an orthonormal basis of $\mathcal H^\prime$.  After expressing the vector state $\ket \phi \in \mathcal{H}\otimes \mathcal{H}^\prime$ as
			\begin{equation} 
				\ket \phi=\sum_{\alpha=1}^D \ket{\phi_\alpha}\otimes\ket{ \alpha}=\sum_{\alpha=1}^D\bigoplus_{k=0}^M\ket*{\phi_\alpha^{(k)}}\otimes\ket{ \alpha},
			\end{equation}
			for some $\ket {\phi_\alpha} =\bigoplus_{k=0}^M\ket*{\phi_\alpha^{(k)}}\in \mathcal{H} $,
			the positivity of $Z$ reads
			\begin{equation}
				\mel{\phi}{Z}{\phi}=\sum_{\alpha,\beta=1}^D\mel{\phi_\alpha}{Z_{\alpha\beta}}{\phi_\beta}=\sum_{\alpha,\beta=1}^D\sum_{k=0}^M\mel*{\phi_\alpha^{(k)}}{Z_{\alpha\beta}^{(k)}}{\phi_\beta^{(k)}}\geqslant 0.
				\label{Z_pos}
			\end{equation}
			Analogously
			\begin{align}
				\begin{split}
					\mel{\phi}{(\Phi_0 \otimes \mathsf{1}_{\mathcal{B}(\mathcal{H}^{\prime})})(Z)}{\phi}&=\sum_{\alpha,\beta=1}^D\mel*{\phi_\alpha}{\Phi_0(Z_{\alpha\beta})}{\phi_\beta } \\
					&=\sum_{\alpha,\beta=1}^D\sum_{k=1}^{M}\mel*{\phi_\alpha^{(k)}}{U_k\tr_{\pi(k),2} (Z_{\alpha\beta}^{(\pi(k))})U_k^\dagger \otimes \rho_k}{\phi_\beta^{(k)}}.
				\end{split}
				\label{CP_comp}
			\end{align}
			Now, let  
			\begin{align}
				\rho_k=\sum_{\ell=1}^{m_k}\lambda_{\ell}^{(k)}\ketbra*{\ell^{(k)}},\\
				\ket*{\phi_{\alpha}^{(k)}}=\sum_{\ell=1}^{m_k}\ket*{\phi_{\alpha \ell}^{(k)}} \otimes \ket*{\ell^{(k)}}
			\end{align}
			be the spectral decomposition of $\rho_k$ and the tensor product decomposition of $\ket*{\phi_{\alpha}^{(k)}}$ in terms of the basis $\ket*{\ell^{(k)}}$ of eigenvectors of $\rho_k$.  
			Therefore, the generic term in the last sum in~\eqref{CP_comp} becomes
			\begin{equation}
				\begin{split}
					&\mel*{\phi_{\alpha}^{(k)}}{U_k\tr_{\pi(k),2} (Z_{\alpha\beta}^{(\pi(k))})U_k^\dagger \otimes \rho_k}{\phi_{\beta}^{(k)}}\\
					&=\sum_{\ell,\ell'=1}^{m_k}\mel*{\phi_{\alpha \ell}^{(k)} \otimes \ell^{(k)}}{U_k\tr_{\pi(k),2} (Z_{\alpha\beta}^{(\pi(k))})U_k^\dagger \otimes \rho_k}{\phi_{\beta \ell'}^{(k)} \otimes \ell'^{(k)}}\\
					&=\sum_{\ell=1}^{m_k}\lambda_{\ell}^{(k)}\mel*{\phi_{\alpha \ell}^{(k)} }{U_k\tr_{\pi(k),2} (Z_{\alpha\beta}^{(\pi(k))})U_k^\dagger }{\phi_{\beta \ell}^{(k)}}\\
					&=\sum_{\ell,\ell'=1}^{m_k}\mel*{\psi_{\alpha \ell}^{(k)} \otimes \ell'^{(k)}}{ Z_{\alpha\beta}^{(\pi(k))}}{\psi_{\beta \ell}^{(k)} \otimes \ell'^{(k)}},
				\end{split}
			\end{equation}
			with
			\begin{equation}
				\ket*{\psi_{\alpha \ell}^{(k)}}=\sqrt{\lambda_\ell^{(k)}}U_k^\dagger\ket*{\phi_{\alpha \ell}^{(k)}}.
			\end{equation}
			Therefore,
			\begin{equation}
				\mel{\phi}{(\Phi_0 \otimes \mathsf{1}_{\mathcal{B}(\mathcal{H}^{\prime})})(Z)}{\phi}=\sum_{\ell,\ell'=1}^{m_k}\sum_{k=1}^M\sum_{\alpha,\beta=1}^D\mel*{\psi_{\alpha \ell}^{(k)} \otimes \ell'^{(k)}}{ Z_{\alpha\beta}^{(\pi(k))}}{\psi_{\beta \ell}^{(k)} \otimes \ell'^{(k)}}\geqslant 0,
			\end{equation}
			as a consequence of~\eqref{Z_pos}.
		\end{subsection}
		\begin{subsection}{Proof of Eqs.~\eqref{Attr_nth_root}-\eqref{eigen_nth_root}}
			\label{NTH-ROOT}
			Let us write the Jordan decompositions of $\Phi$ and $\Phi_{n}$
			\begin{align}
				&\Phi=\sum_{k=1}^N (\lambda_k \mathcal{P}_{k}+\mathcal{N}_{k}),\\
				&\Phi_n=\sum_{k=1}^{N^\prime}(\lambda_{k,n} \mathcal{P}_{k,n}+\mathcal{N}_{k,n}),
			\end{align}
			where $\mathcal{P}_{k}$ and $\mathcal{N}_{k}$ ($\mathcal{P}_{k,n}$ and $\mathcal{N}_{k,n}$) are the eigenprojection and eigennilpotent of $\Phi$ ($\Phi_{n}$) corresponding to the $k$-th eigenvalue $\lambda_{k}$ ($\lambda_{k,n}$). Since $\Phi=\Phi_n^n$, we obtain
			\begin{equation}
				\sum_{k=1}^N (\lambda_k \mathcal{P}_{k}+\mathcal{N}_{k})=\sum_{k=1}^{N^{\prime}} \left( \lambda_{k,n}^n \mathcal{P}_{k,n}+\sum_{m=0}^{n-1} \begin{pmatrix}n \\ m\end{pmatrix} \lambda_{k,n}^{m} \mathcal{N}_{k,n}^{n-m} \right).
				\label{dec_phi_phin}
			\end{equation} 
			It is straightforward to prove that the right-hand side of~\eqref{dec_phi_phin} is an alternative Jordan decomposition of $\Phi$, so from the uniqueness of the Jordan decomposition, we have
			\begin{align}
				&\lambda_k=\lambda_{k,n}^{n},\\
				&\mathcal{P}_{P}=\mathcal{P}_{P,n},
			\end{align}
			which are Eqs.~\eqref{Attr_nth_root} and~\eqref{eigen_nth_root} of the paper.
		\end{subsection}
		
	\end{section}
	
	\section*{References}

	\bibliography{mybibfile.bib}

	\bibliographystyle{unsrt}
	\bibliographystyle{is-bst}
	
\end{document}